\documentclass[12pt,journal,final,onecolumn]{IEEEtran}

\usepackage{graphicx}
\usepackage[cmex10]{amsmath}
\usepackage{amssymb}
\usepackage{amsthm}
\usepackage{amsfonts}
\usepackage{bm}
\usepackage{cite}
\usepackage{url}
\usepackage[svgnames]{xcolor} 
\usepackage{pstricks,pst-node,pst-plot,pstricks-add}
\usepackage[tight,footnotesize]{subfigure}
\usepackage{algorithm}
\usepackage{algpseudocode}

\graphicspath{{figs/}}

\interdisplaylinepenalty=2500  

\input{niesen.def}

\begin{document}

\title{Coded Caching with Nonuniform Demands}

\author{Urs Niesen and Mohammad Ali Maddah-Ali%
\thanks{Urs Niesen was with Bell Labs, he is now with the Qualcomm NJ
Research Center. Mohammad Ali Maddah-Ali is with Bell Labs, Nokia.
Emails: urs.niesen@ieee.org, mohammadali.maddah-ali@nokia.com}%
\thanks{The first version of this paper was posted on arXiv in Aug. 2013.}%
\thanks{The work of Urs Niesen was supported in part by AFOSR under grant FA9550-09-1-0317.}%
}

\maketitle

\begin{abstract} 
    We consider a network consisting of a file server connected through
    a shared link to a number of users, each equipped with a cache.
    Knowing the popularity distribution of the files, the goal is to
    optimally populate the caches such as to minimize the expected load
    of the shared link. For a single cache, it is well known that
    storing the most popular files is optimal in this setting. However,
    we show here that this is no longer the case for multiple caches.
    Indeed, caching only the most popular files can be highly
    suboptimal. Instead, a fundamentally different approach is needed,
    in which the cache contents are used as side information for coded
    communication over the shared link. We propose such a coded caching
    scheme and prove that it is close to optimal.
\end{abstract}

\section{Introduction}
\label{sec:intro}

Caching or prefetching is a technique to reduce network load by storing
part of the content to be distributed at or near end users. In this
paper, we design near-optimal caching strategies for a basic network
scenario, introduced in~\cite{maddah-ali12a}, consisting of one server
connected through a shared, error-free link to $K$ users as illustrated
in Fig.~\ref{fig:setting}. The server has access to $N$ files each of
size $F$ bits. Each user has an isolated cache memory of size $MF$ bits.  

\begin{figure}[htbp]
    \centering%
    \includegraphics{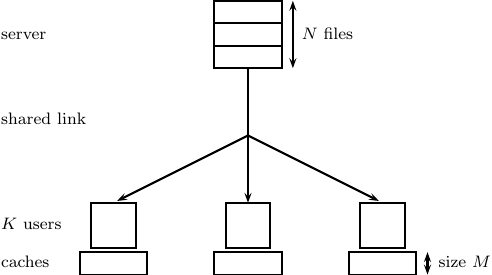}
    \caption{The caching problem with $N=3$ files and $K=3$ users for
    normalized cache size $M=1$.} 
    \label{fig:setting}
\end{figure}

During times of low traffic demand (for example in the early morning) or
when connected to a network with large available bandwidth (for example
a mobile handset connected to WiFi), users can save some part of the
files to their local caches. During a later time, when a user requests
one of the files, the local cache can be used to reduce network load.
More formally, the system operates in two distinct phases: a
\emph{placement phase} and a \emph{delivery phase}. In the placement
phase, each user can save part of the $N$ files in its cache memory.  In
the delivery phase, each user randomly requests one of the files in the
library independently of the other users and with identical
distribution. The server is informed of these requests and proceeds by
transmitting a message over the shared link.  Each user then aims to
reconstruct its requested file from the content of its cache and the
message received over the shared link in the delivery phase. The
placement and delivery phases of the system should be designed to
minimize the load of the shared link subject to the memory constraint in
the placement phase and the reconstruction constraint in the delivery
phase.

Designing and analyzing caching systems for such (and more complicated)
networks has a long history, see for
example~\cite{leff93,korupolu99,baev01,meyerson01,baev08,borst10}.  The
impact of specific file popularities (such as Zipf or other heavy-tail
distributions) on the performance of caching has been analyzed
in~\cite{breslau99, wolman99, hefeeda08}, among others. These papers
consider \emph{uncoded} caching. For the basic network scenario
considered here with only a \emph{single} cache ($K=1$), it turns out
that such uncoded caching strategies are optimal.  Indeed, the optimal
strategy in this case is the highest-popularity first (HPF) caching
scheme\footnote{The HPF strategy is the offline equivalent of the
well-known and commonly-used least-frequently used (LFU) online cache
eviction policy, in which, when a new file is requested, the file that
is least-frequently used is evicted from the cache. If the actual file
popularities are known, as is assumed here, then this reduces to caching
the $M$ most popular files, i.e., the HPF caching scheme.}, in which
each user caches the $M$ most popular files in its cache (see
Appendix~\ref{sec:HPF} for a proof).

In this paper, we show that this intuition for a \emph{single} cache
does not carry over to \emph{multiple} caches ($K > 1$). In fact, HPF
can be arbitrarily suboptimal in the multi-cache setting.  Instead, a
fundamentally different approach to the caching problem is required. We
propose the use of a \emph{coded} caching scheme, recently introduced by
the present authors in~\cite{maddah-ali12a, maddah-ali13}.  These coded
caching schemes work by carefully designing the placement phase so as to
enable a simultaneous coded multicasting gain among the users, even if
they request different files, resulting in significantly better
performance compared to uncoded schemes. The following toy example
illustrates this approach.

\begin{example}
    \label{eg:toy}
    Consider a scenario as depicted in Fig.~\ref{fig:setting} with $N=2$
    files, say $A$ and $B$, with popularities $p_A=2/3$ and $p_B=1/3$.
    Assume we have $K=2$ users and a normalized memory size $M=1$. 

    In this setting, HPF uses the entire memory to cache the more
    popular file $A$. The classic analysis of this system assumes
    unicast delivery, in which case the link load of HPF is given by $K$
    times the cache miss rate times the file size $F$. Here the miss rate
    is $p_B$, and the resulting expected link load of HPF under unicast
    delivery is $2/3F$ bits. If we allow broadcast delivery, then the
    expected link load of HPF is $(1-p_A^2)F = 5/9F$.

    We next describe the coded caching scheme from~\cite{maddah-ali12a}.
    We split each file into two parts of equal size so that $A = (A_1,
    A_2)$ and $B = (B_1, B_2)$. Store $A_1, B_1$ at the first user and
    $A_2, B_2$ at the second user. Assume user one requests file $A$ and
    user two requests file $B$. These requests can then be satisfied
    with a single coded transmission $A_2\oplus B_1$ of size $1/2F$ bits
    from the server, where $\oplus$ denotes bit-wise XOR. Using the
    information stored in their caches, each user can recover its
    requested file. The other requests can be satisfied in a similar
    manner, leading to an expected link load of $1/2F$ bits for the
    coded caching scheme.

    This illustrates that, while HPF minimizes the link load for a
    single cache, this is no longer the case for multiple caches and
    that coding is required in these situations. Furthermore, it shows
    that the miss rate, which is minimized by HPF, is no longer the
    appropriate metric for scenarios with multiple caches.
\end{example}

The coded caching schemes from~\cite{maddah-ali12a, maddah-ali13} were
shown there to approximately minimize the \emph{peak} load (i.e., the
load for the worst-case user requests) of the shared link. However, in
many situations the file popularities differ over many orders of
magnitudes (see Fig.~\ref{fig:netflix_p} in Section~\ref{sec:empirical}
for a real-life example), and hence the \emph{expected} load is more
relevant than the peak load.  This is especially the case for systems
having many users or running over long time periods, for which the law
of large numbers takes effect. 

In this paper, we investigate caching systems under this expected load
performance metric. In particular, we propose a coded caching scheme
that is able to handle widely differing file popularities. A key
ingredient in this scheme is the idea of grouping files with similar
popularities together.  We provide a bicriteria (with respect to cache
memory and link load) approximation guarantee for this grouped coded
caching scheme. The proof of approximate optimality of the proposed
scheme links the optimal expected load to the optimal peak load and is
quite intricate. We apply the proposed scheme to the file popularities
of the Netflix video catalog and show that it can significantly
outperform the baseline uncoded HPF scheme.

The remainder of this paper is organized as follows.
Section~\ref{sec:problem} formally defines the problem setting.
Section~\ref{sec:background} provides background information on coded
caching. Sections~\ref{sec:theoretical} and~\ref{sec:empirical} present
the main results of this paper. Section~\ref{sec:conclusion} contains
concluding remarks. All proofs are delegated to the appendix.

\section{Problem Setting}
\label{sec:problem}

Consider again the network depicted in Fig.~\ref{fig:setting} in
Section~\ref{sec:intro} with one server connected through a noiseless
broadcast channel to $K$ users. The server has a library of $N$ files 
with indices
\begin{equation*}
    \mc{N} \defeq \{1, 2, \dots, N\}
\end{equation*}
each of size $F$ bits. Each user is equipped with a cache memory of size
$MF$ bits. We point out that, while the cache size $MF$ is always an
integer number of bits, the normalized cache size $M$ is in general a
nonnegative real number. 

As mentioned in the introduction, the system operates in two phases, a
placement and a delivery phase. The operation of the placement phase is
specified by a caching function, which describes the contents cached at
each user subject to the memory constraint of $MF$ bits. 

During the subsequent delivery phase, each user requests one of the $N$
files. Denote by $d_k\in\mc{N}$ the request of user $k$, and denote by
$\underline{d}$ the corresponding vector $(d_k)_{k=1}^K$ of all user requests. For
each such user request vector, the delivery phase is specified by an
encoding function at the transmitter and $K$ decoding functions, one for
each receiver. The encoding function maps the file library to a channel
input of size $R_{\underline{d}}F$ bits, for some real number $R_{\underline{d}}$. The decoding
function at user $k$ maps the channel output (which, due to the
noiseless broadcast, is identical to the channel input) and that user's
cache content to an estimate of its requested file $d_k$. The encoding
and decoding functions need to guarantee that each user is able to
correctly recover its requested file. 

The quantity $R_{\underline{d}}F$ is the load for user requests
$\underline{d}$ and $R_{\underline{d}}$ is the corresponding rate.
Observe that we may have different encoding and decoding functions for
different user request vectors and consequently different rates
$R_{\underline{d}}$. The collection of one caching function,
$N^K$ encoding functions (one for each of the $N^K$ possible request
vectors $\underline{d}$), and $KN^K$ decoding functions (one for each of
the $K$ users and for each of the $N^K$ possible request vectors
$\underline{d}$) form a caching scheme. 

We next specify a probabilistic model for the user requests. Denote
by $p_n$ the popularity of file $n$, so that the $p_n$ form a distribution over
$\mc{N}$. Assume that each user independently requests file $n\in\mc{N}$
with probability $p_n$. This request model is called the independent
reference model in the operating systems caching literature
\cite[Chapter~6.6]{coffman73}. The \emph{expected rate} of a caching
scheme under file popularity $\{p_n\}_{n \in \mc{N}}$ is then
\begin{equation*}
    \sum_{ \underline{d} \in\mc{N}^K} \biggl( \prod_{k=1}^K p_{d_k} \biggr) R_{\underline{d}}.
\end{equation*}

In the following, we will connect the expected rate of a caching scheme
to its \emph{peak rate}, which is defined as
\begin{equation*}
    \max_{\underline{d}\in\mc{N}^K} R_{\underline{d}}. 
\end{equation*}
In words, the peak rate is the rate at which every possible user request
can be satisfied. 

We say that a expected (peak) rate is achievable if for all large enough
file sizes $F$ there exists a caching scheme of that expected (peak)
rate. The optimal expected (peak) rate is the infimum of all achievable
expected (peak) rates. Note that the optimal rate is an asymptotic
quantity, describing the optimal system performance in the large-file
limit. In the remainder of the paper, we are interested in
characterizing the optimal expected rate. For the $K$-user cache network
with $N$ files in $\mc{N}$ with popularities $\{p_n\}_{n \in \mc{N}}$,
and normalized cache size $M$, we denote this optimal expected rate by
$R^\star(M,\mc{N},K, \{p_n\})$.

\section{Background on Coded Caching}
\label{sec:background}

For future reference, we define the function $r(M, N, K)$ as 
\begin{equation}
    \label{eq:PeakRate}
    r(M, N, K) 
    \defeq 
    \begin{cases}
        K\cdot(1-M/N) \cdot\min\biggl\{\frac{N}{KM}\Bigl(1-(1-M/N)^K\Bigr),\; \frac{N}{K} \biggr\} 
        & \text{for $0 < M \leq N$}, \\
        \min\{N,K\} & \text{for $M = 0$}, \\
        0 & \text{for $M > N$}.
    \end{cases}
\end{equation}
In~\cite{maddah-ali12a, maddah-ali13}, we have shown that in a system
with $N$ files, $K$ users, and cache memory of normalized size $M$, a
peak rate of   $r(M, N, K)$  is achievable with high probability\footnote{The statement is
probabilistic due to the randomness of the placement phase (see
Algorithm~\ref{alg:1}).} for large enough file size $F$. We now
briefly describe how the peak rate~\eqref{eq:PeakRate} can be achieved;
the discussion here follows~\cite{maddah-ali13}.

In the placement phase, each user saves a random subset of $MF/N$ bits
of each file into its cache memory. These random subsets are chosen
uniformly and independently for each user and file.  Since there are a
total of $N$ files, this satisfies the memory constraint of $MF$ bits.
In the delivery phase, after the users' requests are revealed, the
server delivers the requested files while maximally exploiting the side
information available in each user's cache. This is done by coding
several requested files together.

\begin{algorithm}[h!]
    \caption{Coded caching scheme from~\cite{maddah-ali13} achieving
    peak rate~\eqref{eq:PeakRate}} 
    \label{alg:1}
    \begin{algorithmic}[0]
        \Procedure{Placement}{}
        \For{$k\in\mc{K}, n\in\mc{N}$}
        \State User $k$ caches a random $\tfrac{MF}{N}$-bit subset of file $n$ \label{alg:1_cache}
        \EndFor
        \EndProcedure
        \Statex
        \Procedure{Delivery}{$d_1,\dots,d_K$}
        \For{$s=K, K-1, \ldots, 1$} \label{alg:1_sloop}
        \For{$\mc{S}\subset\mc{K}: \card{\mc{S}}=s$} \label{alg:1_Sloop}
        \State Server sends \(\oplus_{k\in\mc{S}} V_{k,\mc{S}\setminus\{k\}}\) \label{alg:1_send}
        \EndFor 
        \EndFor 
        \EndProcedure
        \Statex
        \Procedure{Delivery'}{$d_1,\dots,d_K$}
        \For{$n\in\mc{N}$}
        \State Server sends enough random linear combinations of bits in file $n$ for all users requesting it to decode \label{alg:1_send2}
        \EndFor
        \EndProcedure
    \end{algorithmic}
\end{algorithm}

The placement and delivery procedures are formally stated in
Algorithm~\ref{alg:1}. In the description of Algorithm~\ref{alg:1},
$\mc{K}$ and $\mc{N}$ denote the sets $\{1,2,\dots,K\}$ and $\{1,2,\dots,N\}$,
respectively. Furthermore, for a subset $S\subset\mc{K}$ of users,
$V_{k,\mc{S}}$ denotes the vector of file bits that are requested by
user $k$ and that are available exclusively in the cache of every user
in $\mc{S}$ and missing in the cache of every user outside $\mc{S}$.
The symbol $\oplus$ denotes again bit-wise XOR, where the vectors
$V_{k,\mc{S}}$ belonging to the same $\oplus$ are assumed to be zero
padded to common length.  Algorithm~\ref{alg:1} contains two possible
delivery procedures; the server chooses whichever one results in smaller
rate.

The following example from~\cite{maddah-ali13} illustrates the
algorithm. 
\begin{example}[\emph{Illustration of Algorithm~\ref{alg:1}}]
    \label{eg:illustration}
    We consider the caching problem with $N=2$ files $A$, $B$, with
    $K=2$ users, and with normalized memory size $M\in(0,2]$. In the
    placement phase of Algorithm~\ref{alg:1}, each user caches a random
    subset of $MF/2$ bits of each file.
    
    Focusing on file $A$, we see that the placement procedure implicitly
    partitions this file into four subfiles
    \begin{equation*}
        A=(A_\emptyset, A_{1},  A_{2}, A_{1,2}),
    \end{equation*}
    where for each subset $\mc{S}\subset\mc{K}$, $A_{\mc{S}}$ denotes the
    bits of file $A$ that are stored exclusively in the cache memories
    of users in $\mc{S}$.\footnote{To simplify the
    exposition, we slightly abuse notation by writing $A_{1}, A_{1,2},
    \dots$ for $A_{\{1\}}, A_{\{1,2\}}, \dots$} For example, $A_{1,2}$ are the
    bits of file $A$ stored in the cache of users one and two, $A_{2}$
    are the bits of $A$ stored exclusively in the cache of user two, and
    $A_\emptyset$ are the bits of file $A$ not stored in the cache of
    either user. For large enough file size $F$, the law of large
    numbers guarantees that
    \begin{equation*}
        \card{A_{\mc{S}}}/F
        \approx (M/2)^{\card{\mc{S}}}(1-M/2)^{2-\card{\mc{S}}}
    \end{equation*}
    with high probability and similarly for file $B$. 

    Consider next the delivery procedure. It can be verified that in this
    setting the first delivery procedure is better and will hence be
    used by the server. Assume users one and two request files $A$ and
    $B$, respectively. In this case, $V_{1,\{2\}}=A_{2}$, $V_{2,\{1\}}=B_{1}$,
    $V_{1,\emptyset}=A_\emptyset$, and $V_{2,\emptyset}=B_\emptyset$.
    Hence, the server sends $A_{2} \oplus B_{1}$, $A_\emptyset$, and
    $B_\emptyset$ over the shared link.

    The file parts $A_\emptyset$ and $B_\emptyset$ are not cached at any
    of the users, and hence they obviously have to be sent from the
    server for successful recovery of the requested files.  The more
    interesting transmission is $A_{2} \oplus B_{1}$.  Observe that user
    one has $B_{1}$ stored in its cache memory.  Hence, user one can
    solve for the desired file part $A_{2}$ from the received message
    $A_{2} \oplus B_{1}$. Similarly, user two can solve for desired file
    part $B_{1}$ using $A_{2}$ stored in its cache memory.  In other
    words, the transmission $A_{2} \oplus B_{1}$ is simultaneously
    useful for both users. Thus, even though the two users request
    different files, the server can successfully multicast useful
    information to both of them. The rate of the messages sent by the
    server is
    \begin{equation*}
        (M/2)(1-M/2)+2(1-M/2)^2
        = 2\cdot (1-M/2)\cdot \frac{1}{M}\bigl( 1-(1-M/2)^2 \bigr)
        = r(M,2,2).
    \end{equation*}

    While the analysis here was for file requests $(A,B)$, the same
    arguments hold for all other possible file requests $(B,A)$,
    $(A,A)$, and $(B,B)$ as well. In each case, the side information in
    the caches is used to create coded multicasting opportunities for
    users with (possibly) different demands. In other words, the content
    placement is performed such as to enable coded multicasting
    opportunities \emph{simultaneously} for all possible demands. The
    rate obtained above holds therefore for every possible user demands,
    i.e., it is an achievable peak rate for the caching problem.
\end{example}

The problem setting and Algorithm~\ref{alg:1} make
several idealized assumptions such as equal file size, synchronous user
requests, large file size, and so on, in order to simplify the
theoretical analysis. These assumptions can be relaxed as is discussed
in~\cite{maddah-ali13}. A working video-streaming prototype using the
coded caching approach and dealing with all these issues is presented
in~\cite{niesen14}.

We also point out that the caching problem is related to
the index coding problem~\cite{birk06, bar-yossef11} and the network
coding problem~\cite{ahlswede00}. The connection and differences between
these problems is described in detail in \cite{maddah-ali12a}.

\section{Theoretical Results}
\label{sec:theoretical}

In~\cite{maddah-ali13}, we prove that the rate $r(M, N, K)$ defined
in~\eqref{eq:PeakRate} of the caching scheme reviewed in
Section~\ref{sec:background} is within a constant factor of the optimal
\emph{peak} rate. In this paper, we are instead interested in the
\emph{expected} rate. As a corollary to the results presented later in
this section, we show that the same rate $r(M, N, K)$ is also within a
constant factor of the optimal \emph{expected} rate for uniform file
popularities, i.e., $p_1= p_2= \dots= p_N$.  The important question is
how to achieve good performance in the more realistic case of files
having popularities varying over several orders of magnitude.

Before explaining the proposed algorithm for such cases, we first
highlight two important features of Algorithm~\ref{alg:1}. The first
feature is that the delivery algorithm exploits coded multicasting
opportunities among every subset of users.  The second feature is the
symmetry in the placement phase. It is this symmetry that permits to
easily identify and quantify the coded multicasting opportunities.
These two features together are crucial for the approximate optimality
of Algorithm~\ref{alg:1} for uniform file popularities.

Consider now some of the options for nonuniform file popularities. One
option is to apply the placement procedure of Algorithm~\ref{alg:1} to
all $N$ files. The advantage of this scheme is that the symmetry of the
content placement is preserved, and therefore, in the delivery phase, we
can again identify and quantify the coded multicasting opportunities
among all $N$ files. The disadvantage is that the difference in the
popularities among the $N$ cached files is ignored. Since these files
can have widely differing popularities, it is wasteful to dedicate the
same fraction of memory to each of them. As a result, this approach
may not perform well.

Another option is to dedicate a different amount of memory to each file
in the placement phase. For example, the amount of allocated memory could
be proportional to the popularity of a file. While this option takes the
different file popularities into account, it breaks the symmetry of the
content placement.  As a result, the delivery phase becomes difficult to
analyze.

Here we propose the idea of \emph{grouping} as an alternative solution
that has the advantages of both of theses approaches and can be proved
to be approximately optimal. In the proposed scheme, we partition the
files into groups with approximately uniform popularities (see
Fig.~\ref{fig:grouping} for a representative example).
In the placement phase, we ignore differing popularities of files within
the same group and allocate to each of these files the same amount of cache
memory.  However, files in different groups may have different memory
allocations. In the delivery phase, the demands within each group are
delivered using Algorithm~\ref{alg:1}, while ignoring coding
opportunities across different file groups. Note that, since the
symmetry within each group has been preserved, the delivery phase is
analytically tractable. Moreover, since different groups have different
memory allocations, we can use more memory for files with higher
popularity. The size of the groups and the amount of memory allocated to
each group can be optimized to minimize the expected rate over the
shared link.

\begin{figure}[htbp]
    \centering%
    \includegraphics{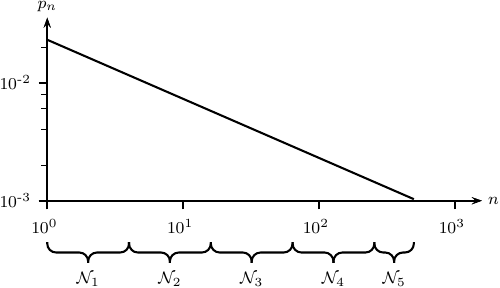}
    \caption{Sample file popularities $p_n$ and 
        file grouping. The figure depicts a Zipf file
        popularity with exponent $-1/2$ over $N=500$ files (see
        Example~\ref{eg:zipf} below for a formal definition of Zipf
        popularity). The files
        are partitioned into five groups $\mc{N}_1, \mc{N}_2, \dots,
        \mc{N}_5$ such that the files within the same group have
        approximately uniform popularity. Each file in the same group is
        allocated the same amount of cache memory, but files in different 
        groups can have different memory allocations.} 
    \label{fig:grouping}
\end{figure}

We now describe the proposed scheme in detail. We partition the $N$
files $\mc{N}$ into $L$ groups $\mc{N}_1, \mc{N}_2, \dots, \mc{N}_L$.
Denote by $N_\ell$ the size of group $\mc{N}_\ell$ so that
\begin{equation*}
    \sum_{\ell=1}^L N_\ell = N.
\end{equation*}

For the placement phase, we allocate a fraction of the memory to each of
the groups $\mc{N}_\ell$ of files. Denote by $M_\ell F$ the number of
bits allocated to cache files in $\mc{N}_\ell$. $M_\ell$ must be chosen
such that the total memory constraint is satisfied, i.e., 
\begin{equation*}
    \sum_{\ell=1}^L M_\ell F = MF.
\end{equation*}

Once the memory allocation is done, we proceed with the actual placement
phase: Each user randomly selects $M_\ell F /N_\ell $ bits from each
file in group $\mc{N}_\ell$ and stores them in its cache memory. With
this, the total number of bits cached at each user from each file group
$\mc{N}_\ell$ is $N_\ell\cdot M_\ell F/N_\ell = M_\ell F$ as required.

In the delivery phase, each user randomly requests a file independently
and identically distributed according to $\{p_n\}_{n\in\mc{N}}$. Denote by
$\mc{K}_\ell$ the set of those users that request a file in the group
$\mc{N}_\ell$ of files. Note that $\mc{K}_1, \mc{K}_2, \dots, \mc{K}_L$
partitions the total of $K$ users into $L$ groups. Denote by
$\msf{K}_\ell$ the cardinality of user group $\mc{K}_\ell$. Since the
groups $\mc{K}_\ell$ depend on the random choice of the user requests,
the cardinalities $\msf{K}_\ell$ are random variables (indicated here
and throughout by the use of sans-serif font). The distribution of
$\msf{K}_\ell$ is Binomial with parameters $N$ and
$\sum_{n\in\mc{N}_\ell}p_n$. The server uses the same
delivery procedure as in Algorithm~\ref{alg:1} $L$ times, once for each
group $\mc{K}_\ell$ of users. 

The next theorem analyzes the expected rate of the proposed grouped coded caching
scheme just described for large file size $F$. This yields an upper
bound on the optimal expected rate $R^\star(M,\mc{N},K)$ for the caching
problem with arbitrary popularity distribution $\{p_n\}_{n\in\mc{N}}$. 
\begin{theorem}
    \label{thm:upper}
    Consider the caching problem with $N$ files $\mc{N}$ with arbitrary
    popularities $\{p_n\}_{n\in\mc{N}}$ and $K$ users
    each with normalized cache size $M$. Let $\mc{N}_1, \mc{N}_2, \dots,
    \mc{N}_L$ be an arbitrary partition of $\mc{N}$. Then
    \begin{align*}
        R^\star(M,\mc{N},K,  \{p_n\}) 
        & \leq \min_{\{M_\ell\}: \sum_\ell M_\ell = M}
        \sum_{\ell=1}^L \E\bigl( r(M_\ell, N_\ell, \msf{K}_\ell) \bigr) \\
        & \leq \sum_{\ell=1}^L \E\bigl( r(M/L, N_\ell, \msf{K}_\ell) \bigr),
    \end{align*}
    where the function $r(M, N, K)$ is defined in~\eqref{eq:PeakRate},
    and where the expectations are with respect to
    $\{\msf{K}_\ell\}_{\ell=1}^L$.
\end{theorem}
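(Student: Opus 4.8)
The plan is to establish the first inequality by producing the grouped coded caching scheme as an achievable strategy whose expected rate is $\sum_{\ell} \E(R(M_\ell,N_\ell,\msf{K}_\ell))$ for every feasible memory allocation, and then to obtain the second inequality for free: the choice $M_\ell = M/L$ for all $\ell\in\{1,\dots,L\}$ satisfies $\sum_{\ell=1}^L M_\ell = M$, so it is a feasible point of the minimization on the first line, and therefore the minimum is no larger than the objective evaluated there, which is exactly the right-hand side of the second line.

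For the first inequality, fix any allocation $\{M_\ell\}$ with $\sum_\ell M_\ell = M$ and analyze the scheme described in the text: in the placement phase each user caches, for every file $n\in\mc{N}_\ell$, a uniformly random $M_\ell F/N_\ell$-bit subset of that file; in the delivery phase, once the demands are revealed, the server forms the user partition $\mc{K}_1,\dots,\mc{K}_L$ according to which group each user requests from, and runs the delivery procedure of Algorithm~\ref{alg:1} separately on each group $\ell$. The key observation is that the cache contents restricted to the files of $\mc{N}_\ell$ and to the users in $\mc{K}_\ell$ form exactly an Algorithm~\ref{alg:1} placement for a caching problem with $N_\ell$ files, $\msf{K}_\ell$ users, and normalized cache size $M_\ell$. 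Hence, by the peak-rate achievability result of Section~\ref{sec:background} applied with $(N,K,M)$ replaced by $(N_\ell,\msf{K}_\ell,M_\ell)$, for $F$ large enough the group-$\ell$ delivery succeeds, with high probability over the random placement and \emph{for every realization of the demands of the users in $\mc{K}_\ell$}, using $R(M_\ell,N_\ell,\msf{K}_\ell)F$ bits. Since $\msf{K}_\ell$ takes one of the finitely many values $0,1,\dots,K$ (with $R(\cdot,\cdot,0)=0$ for an empty group, which needs no transmission), a union bound makes this hold simultaneously for all groups and all possible values of the $\msf{K}_\ell$.

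Summing over the $L$ groups, the total normalized load of the shared link equals $\sum_{\ell=1}^L R(M_\ell,N_\ell,\msf{K}_\ell)$ on a "good placement" event of probability tending to $1$ as $F\to\infty$, while on the complementary event the load is trivially at most $\min\{N,K\}$. Taking the expectation over the random demands (which determine the partition and hence the cardinalities $\msf{K}_\ell$) and over the random placement, and then letting $F\to\infty$, shows that the expected rate of this scheme is $\sum_{\ell=1}^L \E\bigl(R(M_\ell,N_\ell,\msf{K}_\ell)\bigr)$, so $R^\star(M,\mc{N},K)$ is bounded above by it; since this holds for every feasible $\{M_\ell\}$, taking the minimum gives the first inequality. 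The only point needing any care is the "with high probability" caveat inherited from Algorithm~\ref{alg:1} — one must check that the vanishing failure probability contributes negligibly to the expected load, which follows immediately from the uniform bound $\min\{N,K\}$ on the load — so there is no genuine obstacle here; the real difficulty of the paper lies instead in the matching converse, which is treated separately.
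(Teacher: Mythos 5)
Your proposal is correct and follows essentially the same route as the paper's proof: analyze the grouped scheme group-by-group using the peak-rate achievability of Algorithm~\ref{alg:1} with parameters $(M_\ell, N_\ell, \msf{K}_\ell)$, take the expectation over the random demand-induced partition, minimize over allocations, and specialize to $M_\ell = M/L$ for the second inequality. Your extra care with the ``with high probability'' caveat and the union bound over the finitely many values of $\msf{K}_\ell$ is a minor tightening of the same argument, not a different approach.
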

    
The first inequality in Theorem~\ref{thm:upper} upper bounds the optimal
expected rate $R^\star(M,\mc{N},K)$ by the rate of the proposed grouped
coded caching scheme.  Each term in the sum corresponds to the rate of
serving the users in one of the subgroups $\mc{K}_\ell$, and the sum
rate is minimized over the choice of memory allocation $M_\ell$.  The
second inequality follows from the simple memory allocation $M_\ell =
M/L$ for all $\ell$.  We point out that, even if each group is allocated
the same amount of memory $M/L$, the memory allocated to an individual
file in group $\mc{N}_\ell$ is $M/(N_\ell L)$, which varies as a
function of $\ell$ (see also Example~\ref{eg:zipf} below).

As mentioned before, the choice of file groups $\mc{N}_\ell$ can be
optimized to minimize the expected rate. Here we introduce a potentially
suboptimal choice of grouping that has however the advantage of
admitting a formal approximate optimality guarantee. By relabeling the
files, we can assume without loss of generality that $p_1 \geq p_2 \geq
\dots \geq p_N > 0$.  Let $\mc{N}_1$ be the files $\{1, 2, \dots, N_1\}$
with $N_1$ such that $p_{N_1} \geq p_1/2$ and $p_{N_1+1} < p_1/2$. Thus,
$\mc{N}_1$ are the most popular files and all files in this group have
popularity differing by at most a factor two. Similarly, define
$\mc{N}_2$ as the group of next most popular files, and so on.  In
general, for any two files $n, n'$ in the same group $\mc{N}_\ell$ the
file popularities $p_{n}$ and $p_{n'}$ differ by at most a factor two.
In other words, let $n$ be the smallest number in $\mc{N}_\ell$. Then,
\begin{align*}
    p_n \geq p_{n+N_\ell-1} & \geq p_n/2 \\
    \shortintertext{and} \\
    p_{n+N_\ell} & < p_n/2.
\end{align*}
We say that the files $\mc{N}$ are \emph{partitioned to within
popularity factor two} into $\mc{N}_1, \mc{N}_2, \dots, \mc{N}_L$. An
example of this choice of grouping is depicted in
Fig.~\ref{fig:grouping}.

This particular choice of grouping has two important features. First,
files within the same group have popularity differing by at most a
factor two. This limits the loss due to allocation of the same amount of
memory to each file within the same group. Second, since the popularity
of files in $\mc{N}_\ell$ decreases exponentially in $\ell$, the total
number $L$ of groups is small (see also the discussion in
Example~\ref{eg:zipf} below). In fact, $L = \ceil{\log p_1/p_N}$. This limits
the loss due to ignoring coding opportunities across different groups.

The next theorem establishes a lower bound on the optimal expected rate
$R^\star(M,\mc{N},K)$ for $\mc{N}$ partitioned in this manner.
\begin{theorem}
    \label{thm:lower}
    Consider the caching problem with $N$ files $\mc{N}$ with arbitrary
    popularities $\{p_n\}_{n\in \mc{N}}$ and $K$ users
    each with normalized cache size $M$. Let $\mc{N}_1, \mc{N}_2, \dots,
    \mc{N}_L$ be a partition of $\mc{N}$ to within popularity
    factor two. Then
    \begin{equation*}
        R^\star(M,\mc{N},K, \{p_n\})
        \geq  \frac{1}{c L} \sum_{\ell=1}^L \E\bigl( r(M, N_\ell, \msf{K}_\ell) \bigr),
    \end{equation*}
    where $c$ is a positive constant independent of the problem parameters,
    where the function $r(M, N, K)$ is defined in~\eqref{eq:PeakRate},
    and where the expectation is with respect to
    $\{\msf{K}_\ell\}_{\ell=1}^L$.
\end{theorem}

The value of the constant $c$ in Theorem~\ref{thm:lower} is quite large
and could be reduced by a more careful and involved analysis. For
example, it could be optimized by choosing a factor different from two
for the file grouping.

As can be seen from Theorem~\ref{thm:lower}, the specific grouping to
within popularity factor two introduced earlier is used here to develop
a lower bound on the optimal expected rate $R^\star(M,\mc{N},K,
\{p_n\})$.  We emphasize that the optimal scheme achieving
$R^\star(M,\mc{N},K, \{p_n\})$ is not restricted and, in particular, may
not use file grouping.

Recall from Theorem~\ref{thm:upper} that the expected rate of the
proposed coded caching scheme with equal memory allocation $M_\ell =
M/L$ for all $\ell$ is 
\begin{equation*}
    \sum_{\ell=1}^L \E\bigl( r(M/L, N_\ell, \msf{K}_\ell) \bigr).
\end{equation*}
Theorems~\ref{thm:upper} and~\ref{thm:lower} therefore provide a
bicriteria approximation guarantee for the performance of the proposed
grouped coded caching scheme as follows. 

Fix the parameters $K$, $\mc{N}$,  and $\{p_n\}_{n\in \mc{N}}$, and consider the set
$\mc{A}\subset\R^2$ of all achievable expected-rate--memory pairs
$(R,M)$ for the caching problem. Note that 
\begin{equation*}
    R^\star(M,\mc{N},K, \{p_n\}) = \inf \{R: (R,M)\in\mc{A}\}.
\end{equation*}
Theorems~\ref{thm:upper} and~\ref{thm:lower} then show that if a
Pareto-optimal scheme achieves a point $(R,M)$ on the boundary of the
set $\mc{A}$ of achievable expected-rate--memory pairs, then the grouped
coded caching scheme proposed here achieves at least the
expected-rate--memory pair $(cLR, LM)$. We will call this a
$(cL,L)$-bicriteria approximation guarantee in the following.
Bicriteria approximations of this type are quite common in the caching
literature.\footnote{For example, the celebrated competitive-optimality
result of the least-recently used (LRU) caching policy by Sleator and
Tarjan \cite[Theorem~6]{sleator85} proves that LRU has miss rate less
than twice that of the optimal scheme with half the paging memory.}

Due to the exponential scaling construction of $\mc{N}_\ell$, the number
of groups $L=\ceil{\log p_1/p_N}$ is usually small, i.e., the factor $L$
in the approximation gap is usually modest (see also the discussion in
Example~\ref{eg:zipf} below and in Section~\ref{sec:empirical}). We
illustrate this with several examples.

\begin{example}[\emph{Uniform File Popularity}]
    \label{eg:uniform}
    For the special case of uniform file popularities $p_n=1/N$
    for all $n\in \mc{N}$, we have $L=1$. Hence
    Theorems~\ref{thm:upper} and \ref{thm:lower} imply that the optimal
    expected rate $R^\star(M,\mc{N},K, \{1/N\})$ satisfies
    \begin{equation*}
        \frac{1}{c} r(M,N,K) 
        \leq R^\star(M,\mc{N},K, \{1/N\})
        \leq r(M,N,K),
    \end{equation*}
    showing that the peak and expected rates are approximately the same
    in this case and that the proposed coded caching scheme is within a
    constant factor of optimal. From the results in~\cite{maddah-ali13}
    this also implies that the expected rate of the scheme proposed here
    can be up to a factor $\Theta(K)$ smaller than HPF. Thus, we see
    that, while HPF minimizes the expected rate for a single cache
    ($K=1$), it can be significantly suboptimal for multiple caches
    ($K>1$). 
\end{example}

\begin{example}[\emph{Zipf File Popularity}]
    \label{eg:zipf}
    Consider next the important special case of a Zipf popularity
    distribution. This is a heavy-tail distribution with 
    \begin{align*}
        p_n & \defeq \zeta n^{-\alpha}, \\
        \zeta^{-1} & \defeq \sum_{n=1}^N n^{-\alpha},
    \end{align*}
    for all $n\in\mc{N}$ and for fixed parameter $\alpha > 0$ (see
    Fig.~\ref{fig:grouping} for an example). Typical values of the
    parameter $\alpha$ are between $1/2$ and $2$.

    In this case, there are several groups $\mc{N}_\ell$, and their sum
    popularities (i.e., the sum of the popularities of the files in
    $\mc{N}_\ell$) decay only slowly or not at all as a function of
    $\ell$. In fact, the cardinality $N_\ell$ of group $\mc{N}_\ell$ 
    is approximately 
    \begin{equation*}
        N_\ell \approx 2^{\ell/\alpha}
    \end{equation*}
    and their sum popularity is approximately 
    \begin{equation*}
        \sum_{n\in\mc{N}_\ell} p_n \approx \zeta 2^{\ell(1/\alpha-1)}.
    \end{equation*}
    Thus, we see that for $\alpha > 1$ the sum popularity decreases
    with $\ell$, whereas for $\alpha < 1$ it increases with $\ell$. 

    The proposed grouped coded caching scheme deals with this heavy tail
    by careful allocation of the cache memory among the different file
    groups. As a result, the scheme is able to exploit both the fact
    that \emph{individually} each file in a group $\mc{N}_\ell$ may have
    small probability, but at the same time \emph{collectively} each
    group of files may have high probability.

    Assume next that $\alpha > 1$. The expected number of users
    requesting files in group $\mc{N}_\ell$ is then approximately
    \begin{equation*}
        \E(\msf{K}_\ell) 
        = K \sum_{n\in\mc{N}_\ell} p_n 
        \approx  K \zeta 2^{\ell(1/\alpha-1)}.
    \end{equation*}
    Thus, for
    \begin{equation*}
        \ell \gg \frac{1}{1-1/\alpha}\log(\zeta K),
    \end{equation*}
    the expected number of users in group $\ell$ is less than one.  By
    serving the few users in groups larger than this by unicast
    delivery, we can effectively restrict the number $L$ of file groups
    to about $\frac{1}{1-1/\alpha}\log(\zeta K)$. As a
    result, Theorems~\ref{thm:upper} and \ref{thm:lower} then
    effectively (up to the unit additive term arising from users in the
    last group) provide a $\bigl(O(\log K), O(\log K)\bigr)$-bicriteria
    approximation guarantee for the proposed grouped coded caching scheme.
\end{example}

The proofs of Theorems~\ref{thm:upper} and \ref{thm:lower} are presented
in Appendices~\ref{sec:upper} and \ref{sec:lower}, respectively. The proof
of Theorem~\ref{thm:upper} analyzes the rate of the proposed scheme
using the results from~\cite{maddah-ali13} for each subgroup of files
and is straightforward. The proof of Theorem~\ref{thm:lower} links the
optimal expected rate to the optimal peak rate and is quite intricate,
involving a genie-based uniformization argument as well as a
symmetrization argument.

\section{Empirical Results}
\label{sec:empirical}

This section continues the comparison of the performance of the proposed
grouped coded caching scheme to HPF using an empirical file popularity
distribution. We choose the file popularities to be those of the $N =
10\,000$ most popular movies from the Netflix catalog. Following the
approach in~\cite{cha07}, we estimate the file popularities from the
dataset made available by Netflix for the Netflix Prize~\cite{netflix}.
The estimated file popularities are shown in Fig.~\ref{fig:netflix_p}. 

\begin{figure}[htbp]
    \centering%
    \includegraphics{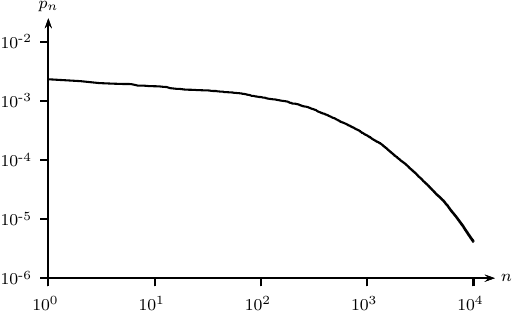}
    \caption{File popularities $\{p_n\}_{n=1}^N$ for the Netflix movie catalog.}
    \label{fig:netflix_p}
\end{figure}

As can be seen from the figure, the popularities exhibit a flat
``head'' consisting of the first 600 or so most popular files. This is
followed by a power-law ``tail'' with exponent of approximately $-2$.
This is in line with the behavior of other multimedia
content~\cite{gummadi03,hefeeda08}. 

We start with the analysis of HPF. For HPF, the expected rate is equal
to the expected number of users with a request outside the first $M$
most popular files, i.e., 
\begin{equation*} 
    K \sum_{n = M+1}^N p_n.
\end{equation*}
This rate is depicted in Fig.~\ref{fig:tradeoff} for $K=300$ users and
various values of cache size $M$. Increasing the cache size from $M$ to
$M+1$ decreases the rate of HPF over the shared link by $Kp_{M+1}$.
From Fig.~\ref{fig:netflix_p}, we expect the rate to initially decay
rather quickly with $M$ until the end of the ``head'' of the file
popularity curve. Once $M$ is big enough for the entire ``head'' to be
cached, we expect further decreases in $M$ to lead to diminishing
returns.  This behavior is indeed clearly visible in
Fig.~\ref{fig:tradeoff}. We conclude that a reasonable choice of $M$ for
HPF is thus the size of the ``head'' of the popularity distribution,
which in this case corresponds to about $M=600$.

\begin{figure}[htbp]
    \centering%
    \includegraphics{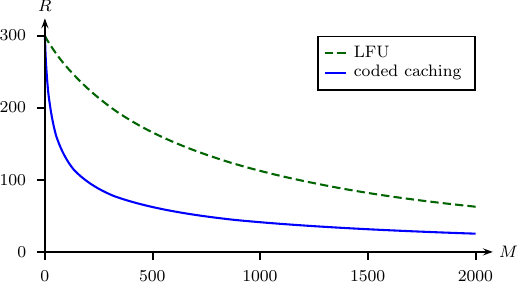}
    \caption{Memory-rate tradeoff under Netflix file popularities for
    the baseline HPF scheme (dashed green line) and the proposed grouped 
    coded caching scheme (solid blue line). The number of users is $K=300$,
    and the number of files is $N=10\,000$.}
    \label{fig:tradeoff}
\end{figure}

We continue with the evaluation of the proposed grouped coded caching
scheme.  From Theorem~\ref{thm:upper}, the rate of the proposed scheme
is
\begin{equation*}
    \min_{\{M_\ell\}: \sum_\ell M_\ell = M}
    \sum_{\ell=1}^L \E\bigl( r(M_\ell, N_\ell, \msf{K}_\ell) \bigr)
\end{equation*}
with $r(M, N, K)$ as defined in \eqref{eq:PeakRate}.  Note that $r(M,
N, K)$ is a concave function of $K$. We can thus apply Jensen's
inequality to upper bound the rate of the grouped coded caching scheme by 
\begin{equation}
\label{eq:RateJensen}
    \min_{\{M_\ell\}: \sum_\ell M_\ell = M}
    \sum_{\ell=1}^L r\bigl(M_\ell, N_\ell, \E(\msf{K}_\ell)\bigr),
\end{equation}
where 
\begin{equation}
\label{eq:RateJensen2}
    \E(\msf{K}_\ell) = K \sum_{n\in\mc{N}_\ell} p_n.
\end{equation}
We will be working with this upper bound in the following. This upper
bound on the rate of the proposed grouped coded caching scheme is depicted in
Fig.~\ref{fig:tradeoff}. 

Comparing the curves in Fig.~\ref{fig:tradeoff}, it is clear that the
proposed grouped coded caching scheme significantly improves upon the baseline
HPF scheme. In particular, for a cache size of $M=600F$ bits (where $F$
is the file size), HPF results in an average of $152F$ bits being sent
over the shared link. In contrast, for the same value of $M$, the
proposed scheme results in an average of $56F$ bits being sent over the
shared link---an improvement by more than a factor $2.7$.  Similarly,
assume we want to operate at the same expected load of $152F$ bits of the
shared link as achieved by HPF with $M=600F$ bits. The proposed coded
caching scheme can achieve the same load with only $M=63F$ bits in cache
memory---an improvement by a factor $9.5$.

From Theorems~\ref{thm:upper} and~\ref{thm:lower}, we also know that the
proposed coded caching scheme achieves the optimal memory-rate tradeoff
to within a factor $cL$ in the rate direction and to within a factor $L$
in the memory direction. In this example, the value of $L$ is $10$.

\section{Conclusions and Follow-Up Results}
\label{sec:conclusion}

In this paper, we have demonstrated and analyzed the benefits of coding
for caching systems with nonuniform file popularities. While (uncoded)
HPF is optimal for such systems with a single cache, we show here that
coding is required for the optimal operation of caching systems with
multiple caches.

Since a preprint of this work was first posted on arXiv in August of
2013, several follow-up papers~\cite{hachem14,ji15,hachem15,zhang15}
have refined the results presented here. In~\cite{ji15}, it is suggested
to use the decentralized coded caching scheme proposed
in~\cite{maddah-ali13} (see Algorithm~1 in this paper) for the $N_1$
most popular files, for some $N_1$. Any requests not from the $N_1$
most popular files are delivered directly from the server. This
corresponds to using the grouped coded caching scheme presented in this
paper with only $L=2$ groups and with $M_1=M$ and $M_2=0$. It is proved
in \cite{ji15} that by optimizing the value of $N_1$, this approach is
optimal to within a constant factor for Zipf popularity distributions in
the limit as $K, N \to \infty$. 

Subsequently, \cite{zhang15} suggested to choose $N_1$ as the largest
$n$ such that $KMp_{n} \geq 1$. By generalizing and tightening the
uniformization and symmetrization converse arguments developed in this
paper (in particular using a clever new argument to capture the load of
requests in the second, uncached, file group) it is shown in
\cite{zhang15} that this scheme is optimal to within a universal
constant multiplicative-plus-additive gap for \emph{all} popularity
distributions and all finite values of $K$ and $N$ (assuming $M \geq
2$). These two results \cite{ji15,zhang15} thus show that, surprisingly,
$L=2$ groups are sufficient to adapt to the nonuniform nature of the
popularity distribution.

In another line of follow-up work, \cite{hachem14,hachem15} takes a
different approach to deal with nonuniform popularity distributions. In
these works, it is assumed that files are split into several popularity
groups with a fixed number of users requesting files from each group.
For systems with a single user per cache (as considered here and in the
papers mentioned in the previous two paragraphs), \cite{hachem15} also
shows that it is approximately optimal to use only $L=2$ groups with all
memory allocated to the first group combined with decentralized coded
caching for the delivery. However, the situation changes for systems
with many users per cache, where \cite{hachem15} shows that grouped
coded caching with $L > 2$ groups is optimal to within a constant factor
in rate.

\appendices

\section{Optimality of the Highest-Popularity First Caching Rule for $K=1$}
\label{sec:HPF}

It is well known that HPF minimized the expected miss rate for systems
with a single cache ($K=1$) when coding is not allowed (see for example
\cite[Problem 6.11]{coffman73}).  In this appendix, we prove that for
single-cache systems HPF minimizes the expected rate among the larger
class of schemes that do allow coding. In particular, this shows that
coding is not required for $K=1$.

Recall that in the HPF placement phase each cache stores the $\floor{M}$
most popular files and the first $(M-\floor{M})F$ bits of file
$\floor{M}+1$. In the delivery phase, the server sends all uncached
requested files. If we assume that $p_1 \geq p_2 \geq \dots \geq p_N$ as
before, then HPF achieves expected rate $\sum_{n=\ceil{M}+1}^{N}
p_{n} + (\ceil{M} -M)  p_{\ceil{M}}$ for $K=1$. In particular, if $M$ is
an integer, then the expected rate of HPF is $\sum_{n=M+1}^{N}  p_{n} $. 

\begin{lemma}
    Highest-popularity first minimizes the expected rate for the
    $K=1$-user caching problem with $N$ files of arbitrary popularity. 
\end{lemma}

\begin{IEEEproof}
    We will show that any caching scheme has an expected rate at least
    as large as that of HPF. Assume the cache memories have been filled
    according to a placement function. In the delivery phase, the server
    sends the message $X_n$ of size $R_nF$ bits in response to the user
    requesting file $n$.  The user recovers file $n$ from $X_n$ and the
    content of its cache.  The caching scheme has then expected link
    load lower bounded as 
    \begin{align*}
        \sum_{n=1}^N p_n R_n F
        & \geq  \sum_{n=1}^N p_n H(X_n) \\
        &\overset{(a)}{=} \sum_{n=1}^{N} (p_{n}-p_{n+1}) \sum_{i=1}^{n} H(X_i)\\
        &\overset{(b)}{\geq} \sum_{n=1}^{N} (p_{n}-p_{n+1} ) (n-M)^+F\\
        & = \sum_{n=\ceil{M}}^{N} (p_{n}-p_{n+1} ) (n-M)F\\
        & = F\sum_{n=\ceil{M}}^{N} n p_{n} - F\sum_{n=\ceil{M}}^{N} n p_{n+1}  - FM\sum_{n=\ceil{M}}^{N} (p_{n}-p_{n+1} ) \\
        & = F\sum_{n=\ceil{M}}^{N} n p_{n} - F\sum_{n=\ceil{M}+1}^{N+1} (n-1) p_{n}  -FM p_{\ceil{M}}\\
        & = F\sum_{n=\ceil{M}+1}^{N}  p_{n}  + F(\ceil{M} -M)  p_{\ceil{M}},
    \end{align*}
    where in $(a)$ we set $p_{N+1}=0$.  The most important step in the
    above chain of inequalities is $(b)$, which is based on a cut-set
    argument as follows.  We note that the user can decode files $1, 2,
    \ldots, n$ from $X_1, X_2, \ldots, X_n$ and the content of its
    cache, and therefore $MF+\sum_{i=1}^{n}  H(X_i) \geq nF$. 
    To summarize, we have
    \begin{equation*}
        \sum_{n=1}^N p_n R_n 
        \geq \sum_{n=\ceil{M}+1}^{N}  p_{n}  + (\ceil{M} -M)  p_{\ceil{M}}
    \end{equation*}
    for any caching scheme.  Noting that the right-hand side is the
    expected rate achieved by HPF concludes the proof.
\end{IEEEproof}

\section{Proof of Theorem~\ref{thm:upper}}
\label{sec:upper}

We analyze the performance of the grouped coded caching scheme described
in Section~\ref{sec:theoretical}. The rate $r(M, N, K)$ as defined in
\eqref{eq:PeakRate} is the peak rate achieved by Algorithm~\ref{alg:1}
for $N$ files and $K$ users each with a cache memory of normalized size $M$. Thus,
the rate $r(M, N, K)$ of this scheme is the same for every possible user
requests. Put differently, the expected value (over all requests) of the
rate of Algorithm~\ref{alg:1} is the same as the rate for any specific
request.

Consider now a specific random request $(\msf{d}_1, \msf{d}_2, \dots,
\msf{d}_K)$.  As explained in Section~\ref{sec:theoretical}, this
request results in the users being partitioned into subsets $\mc{K}_1,
\dots, \mc{K}_L$ with cardinalities  $\msf{K}_1, \dots, \msf{K}_L$.
Since the delivery algorithm treats each of the groups $\mc{K}_\ell$
independently, the rate for request $(\msf{d}_1, \msf{d}_2, \dots,
\msf{d}_K)$ is
\begin{equation*}
    \sum_{\ell=1}^L r(M_\ell, N_\ell, \msf{K}_\ell).
\end{equation*}
We point out that the only randomness in this expression is due to the
random size $\msf{K}_\ell$ of the random group $\mc{K}_\ell$, which in
turn derives from the random user requests. Taking the expectation over
all $\msf{K}_\ell$ then yields the following upper bound on the optimal
expected rate $R^\star(M, \mc{N}, K, \{p_n\})$:
\begin{equation*}
    R^\star(M, \mc{N}, K, \{p_n\})
    \leq \sum_{\ell=1}^L \E\bigl( r(M_\ell, N_\ell, \msf{K}_\ell) \bigr).
\end{equation*}

We can minimize this upper bound by optimizing over the choice of memory
allocation. This yields
\begin{equation*}
    R^\star(M, \mc{N}, K, \{p_n\})
    \leq \min_{\{M_\ell\}: \sum_\ell M_\ell = M}
    \sum_{\ell=1}^L \E\bigl( r(M_\ell, N_\ell, \msf{K}_\ell) \bigr).
\end{equation*}
One particular choice of $M_\ell$ is $M/L$ for each $\ell$, which yields
\begin{equation*}
    R^\star(M, \mc{N}, K, \{p_n\})
    \leq 
    \sum_{\ell=1}^L \E\bigl( r(M/L, N_\ell, \msf{K}_\ell) \bigr).
\end{equation*}
Together, these two equations prove Theorem~\ref{thm:upper}.\hfill\IEEEQED

\section{Proof of Theorem~\ref{thm:lower}}
\label{sec:lower}

We will prove the equivalent statement
\begin{equation*}
    \sum_{\ell=1}^L \E\bigl( r(M, N_\ell, \msf{K}_\ell) \bigr)
    \leq c L R^\star(M,\mc{N},K, \{p_n\}).
\end{equation*}
The proof of this inequality is based on the following three claims. 

\begin{claim}
    \label{thm:claim1}
    For the caching problem with $N$ files $\mc{N}$ with \emph{uniform}
    popularity and with $K$ users each with normalized cache size $M$, we
    have
    \begin{equation*}
        r(M, N, K) \leq 72 R^\star(M, \mc{N}, K, \{1/N\}),
    \end{equation*}
    with $r(M, N, K)$ as defined in~\eqref{eq:PeakRate}.
\end{claim}

This claim upper bounds the peak rate $r(M, N, K)$ of
Algorithm~\ref{alg:1} by $72$ times the optimal expected rate
$R^\star(M, \mc{N}, K, \{1/N\})$ for the caching problem with
\emph{uniform} file popularity. Recall that Algorithm~\ref{alg:1} was
shown in~\cite{maddah-ali13} to be approximately optimal with respect to
the peak-rate criterion. The claim thus states that for uniform file
popularity optimal peak rate and optimal expected rate are approximately
the same. The proof of Claim~\ref{thm:claim1}, reported in
Appendix~\ref{sec:lower_claim1}, is based on a \emph{symmetrization
argument}, which aggregates the rates for several different demand
tuples and then applies a cut-set bound argument to this aggregated
rate. 

Applying Claim~\ref{thm:claim1} to the file set $\mc{N}_\ell$ of size
$N_\ell$ and with $K_\ell$ users, we thus have
\begin{equation}
    \label{eq:claim1}
    r(M, N_\ell, K_\ell) 
    \leq 72 R^\star(M, \mc{N}_\ell, K_\ell, \{1/N_\ell\}),
\end{equation}

\begin{claim}
    \label{thm:claim2}
    For the caching problem with $N$ files $\mc{N}$ of popularity
    $p_1\geq p_2\geq \dots \geq p_N$  satisfying $p_1 \leq 2p_N$ and
    with $K$ users each with normalized cache size $M$, we have
    \begin{equation*}
        R^\star(M, \mc{N}, K, \{1/N \})
        \leq 12 R^\star(M, \mc{N}, K, \{p_n\}).
    \end{equation*}
\end{claim}

This claim upper bounds the optimal expected rate $R^\star(M, \mc{N}, K,
\{1/N \})$ for a system with uniform file popularities by the optimal
expected rate $R^\star(M, \mc{N}, K, \{p_n\})$ for a system with almost
uniform file popularities (i.e., file popularities differing by at most
a factor two). Intuitively, the claim thus states that a small change in
the file popularity results only in a small change in the expected rate
of the optimal caching scheme. The proof of Claim~\ref{thm:claim2} is
reported in Appendix~\ref{sec:lower_claim2} and introduces a genie-based
\emph{uniformization argument} to transform almost uniform to uniform
file popularities.

Recall that the partition $\mc{N}_1, \mc{N}_2, \dots, \mc{N}_L$
guarantees that file popularities within the same file group differ by at
most a factor two. We can hence apply Claim~\ref{thm:claim2} to the file
set $\mc{N}_\ell$ with $K_\ell$ users to obtain
\begin{equation}
    \label{eq:claim2}
    R^\star(M, \mc{N}_\ell, K_\ell, \{1/N_\ell\})
    \leq 12 R^\star(M, \mc{N}_\ell, K_\ell, \{\xi_\ell p_n\} ),
\end{equation}
where the right-hand side is evaluated with respect to the file
popularities $\{\xi_\ell p_n\}_{n\in\mc{N}_\ell}$ for
normalization constant 
\begin{equation}
    \label{eq:xi}
    \xi^{-1}_\ell \defeq \sum_{n\in\mc{N}_\ell} p_n.
\end{equation}

\begin{claim}
    \label{thm:claim3}
    For every $\ell\in\{1,2,\dots,L\}$,
    \begin{equation*}
        \E\bigl( R^\star(M, \mc{N}_\ell, \msf{K}_\ell, \{\xi_\ell p_n\}) \bigr)
        \leq R^\star(M, \mc{N}, K, \{p_n\}),
    \end{equation*}
    where the left-hand side is evaluated with respect to the
    file popularities $\{\xi_\ell p_n\}_{n\in\mc{N}_\ell}$ for
    normalization constant $\xi_\ell$ as in~\eqref{eq:xi}.
\end{claim}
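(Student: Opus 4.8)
The plan is to show that serving only the users who request files in group $\mc{N}_\ell$ (and ignoring all other users) cannot be easier than serving all $K$ users simultaneously, hence the expected cost of the former is bounded by the expected cost of the latter. The key conceptual point is a coupling between two random experiments: in the left-hand side, $\msf{K}_\ell$ users draw their demands i.i.d.\ from the renormalized distribution $\{\xi p_n\}_{n\in\mc{N}_\ell}$ on $\mc{N}_\ell$; in the right-hand side, $K$ users draw their demands i.i.d.\ from $\{p_n\}_{n\in\mc{N}}$ on all of $\mc{N}$. First I would observe that these two experiments can be realized on a common probability space as follows: let the $K$ users draw demands from $\{p_n\}$; let $\mc{K}_\ell \subseteq [K]$ be the (random) subset of users whose demand lands in $\mc{N}_\ell$, so $\card{\mc{K}_\ell} = \msf{K}_\ell$; then, conditioned on $\msf{K}_\ell = k$, the demands of the users in $\mc{K}_\ell$ are precisely i.i.d.\ draws from the conditional distribution $\{p_n / \sum_{m\in\mc{N}_\ell} p_m\}_{n\in\mc{N}_\ell} = \{\xi p_n\}_{n\in\mc{N}_\ell}$ (which is exactly the distribution under which the left-hand side is evaluated, with $\xi = 1/\sum_{m\in\mc{N}_\ell}p_m$). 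So the left-hand side equals $\E\bigl(R^\star(M,\mc{N}_\ell,\msf{K}_\ell)\bigr)$ where the inner demands are coupled to the global demand vector in this way.

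Next I would argue the pointwise (per-realization) inequality: for any fixed demand vector $(d_1,\dots,d_K)$ with induced subset $\mc{K}_\ell$ of size $k$ and induced subdemand $(d_i)_{i\in\mc{K}_\ell}$, the optimal rate for serving all $K$ users is at least the optimal rate for serving just the $k$ users in $\mc{K}_\ell$ with demands restricted to $\mc{N}_\ell$. This is a ``restriction is easier'' statement and follows from a standard reduction: given any valid caching-and-delivery scheme for the full system that handles the demand $(d_1,\dots,d_K)$, one obtains a valid scheme for the subsystem $(\mc{N}_\ell, k)$ by having the $k$ subsystem users adopt the cache contents that users in $\mc{K}_\ell$ would have used and having the server send the same message; the subsystem users decode their files exactly as the corresponding full-system users would. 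Since $R^\star(M,\mc{N}_\ell,k)$ is the expected rate of the \emph{optimal} subsystem scheme, it is at most the rate this induced scheme incurs, which for each demand realization is at most the rate of the chosen full-system scheme. Taking the full-system scheme to be optimal and taking expectations gives $\E\bigl(R^\star(M,\mc{N}_\ell,\msf{K}_\ell)\bigr) \leq R^\star(M,\mc{N},K)$.

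The main obstacle I anticipate is making the ``restriction'' step fully rigorous, because the quantities $R^\star$ are \emph{expected} rates over random demands, not single-demand quantities, so one cannot literally restrict a scheme on a per-demand basis and then call the result optimal. The clean way to handle this is to introduce, as an intermediate object, the optimal expected rate \emph{conditioned on a fixed demand vector} (equivalently, the per-demand optimal rate $R^\star(M,\mc{N},K \mid d_1,\dots,d_K)$), establish the pointwise restriction inequality at that level, and then take iterated expectations — first over the subdemand given $\msf{K}_\ell$, then over $\msf{K}_\ell$ — using the coupling above together with the law of total expectation, being careful that the optimal subsystem scheme is allowed to be a \emph{single} scheme (placement fixed before demands) whose expected rate is what $R^\star(M,\mc{N}_\ell,\msf{K}_\ell)$ denotes. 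A minor technical point to address is that the $K - \msf{K}_\ell$ users outside $\mc{K}_\ell$ still occupy cache memory and still must be served in the full system; but this only \emph{adds} constraints and \emph{adds} to the delivered message, so it can only increase the full-system rate, which is exactly the direction we need. I expect the argument to be short once these bookkeeping issues are set up correctly.
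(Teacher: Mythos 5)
Your proposal is correct and is essentially the paper's argument: the paper phrases the reduction as a genie who delivers for free the files of all users requesting outside $\mc{N}_\ell$ (which can only reduce the load), which is the mirror image of your ``restrict the optimal full-system scheme to the users in $\mc{K}_\ell$'' step, and the coupling you make explicit---that conditioned on $\msf{K}_\ell$ the in-group demands are i.i.d.\ from $\{\xi p_n\}_{n\in\mc{N}_\ell}$---is exactly what the paper uses implicitly. Your writeup is more detailed than the paper's one-paragraph proof but contains no new idea and no gap.
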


Recall that $R^\star(M, \mc{N}_\ell, k, \{\xi_\ell p_n\} )$ is the
optimal expected rate for a system with $k$ users.  Clearly, this is a
function of $k$, say $f(k)$. Let now $\msf{K}_\ell$ be the random number
of users in $\mc{K}_\ell$, and construct the random variable
$f(\msf{K}_\ell)$. Then the left-hand side of Claim~\ref{thm:claim3} is
the expectation of this random variable. Claim~\ref{thm:claim3} thus
states that if the server is only asked to handle the demands of users
in $\mc{K}_\ell$, ignoring the demands of the remaining users, the rate
of the optimal system decreases. The proof of Claim~\ref{thm:claim3} is
reported in Appendix~\ref{sec:lower_claim3}.

Using these three claims, Theorem~\ref{thm:lower} is now straightforward
to prove. Indeed, from Claims~\ref{thm:claim1} and~\ref{thm:claim2} (and
in particular \eqref{eq:claim1} and \eqref{eq:claim2}), we have for any
$K_1, \dots, K_L$, 
\begin{equation*}
    \sum_{\ell=1}^L r(M, N_\ell, K_\ell) 
    \leq 72\cdot 12 \sum_{\ell=1}^{L} R^\star(M, \mc{N}_\ell, K_\ell, \{\xi_\ell p_n\} ) .
\end{equation*}
Evaluating this expression at $K_\ell = \msf{K}_\ell$ and taking the
expectation yields
\begin{equation*}
    \sum_{\ell=1}^L \E\bigl( r(M, N_\ell, \msf{K}_\ell) \bigr)
    \leq 72\cdot 12 \sum_{\ell=1}^{L} \E\bigl( R^\star(M, \mc{N}_\ell,
    \msf{K}_\ell, \{\xi_\ell p_n\} ) \bigr).
\end{equation*}
Combining this with Claim 3 yields
\begin{equation*}
    \sum_{\ell=1}^L \E\bigl( r(M, N_\ell, \msf{K}_\ell) \bigr)
    \leq 72\cdot 12 L R^\star(M, \mc{N}, K, \{p_n\} ),
\end{equation*}
which proves the desired result with $c \defeq 72\cdot 12$. \hfill\IEEEQED

\subsection{Proof of Claim~\ref{thm:claim1} (Symmetrization and Cut-Set Arguments)}
\label{sec:lower_claim1}

We need to show that
\begin{equation}
    \label{eq:claim1_0}
    R^\star(M, \mc{N}, K, \{1/N\}) 
    \geq \frac{1}{72} r(M, N, K).
\end{equation}
The left-hand side is the expected rate of the optimal scheme for
uniform file popularity over $\mc{N}$, i.e., with $p_n=1/N$ for all
$n\in\mc{N}$. The right-hand side is (up to the constant) equal to the
peak rate \eqref{eq:PeakRate} of Algorithm~\ref{alg:1}. 

Let us first introduce some additional notation. Consider the random
demand vector $\underline{\msf{d}}\in\mc{N}^K$ and denote by
$w(\underline{\msf{d}})$ the number of its distinct entries.  For
$s\in\{1,2,\dots,\min\{N,K\}\}$, denote by $\bar{R}(M, \mc{N}, K,s)$ the
expected rate of the optimal scheme for uniform file popularity over
$\mc{N}$ when conditioned on the event that $w(\underline{\msf{d}})=s$.
We point out that in $\bar{R}(M, \mc{N}, K,s)$ both the placement phase
and the delivery phase of the system are optimized for this conditioning
on $w(\underline{\msf{d}})=s$.

\begin{example}
    \label{eg:repetition}
    For $s=K$, $\bar{R}(M, \mc{N}, K,K)$ corresponds to the expected
    rate of the optimal scheme with $K$ requests chosen uniformly at
    random from $\mc{N}$ \emph{without replacement}.
\end{example}

The proof of~\eqref{eq:claim1_0} consists of three steps, summarized by
the following three lemmas.

\begin{lemma}
    \label{thm:step1}
    For any $s\in\{1, 2, \dots, K\}$, we have
    \begin{equation*}
        R^{\star}(M, \mc{N}, K, \{1/N\})
        \geq \Pp\bigl(w(\underline{\msf{d}}) \geq s\bigr)\bar{R}(M,
        \mc{N}, s,s),
    \end{equation*}
    where $\underline{\msf{d}}$ is uniformly distributed over
    $\mc{N}^K$.
\end{lemma}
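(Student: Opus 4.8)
The plan is to prove the equivalent bound $\E\bigl[\rho(\underline{\msf{d}})\mid w(\underline{\msf{d}})\ge s\bigr]\ge\bar{R}_s(M,\mc{N},s)$, where $\rho(\underline{d})$ denotes the rate incurred on a demand vector $\underline{d}$ by an optimal scheme for the $K$-user problem with uniform file popularity and $\underline{\msf{d}}$ is uniform on $\mc{N}^K$; once this is in hand, the lemma follows from
\[
    \bar{R}(M,\mc{N},K)=\E\rho(\underline{\msf{d}})\;\ge\;\Pp\bigl(w(\underline{\msf{d}})\ge s\bigr)\,\E\bigl[\rho(\underline{\msf{d}})\,\big|\,w(\underline{\msf{d}})\ge s\bigr].
\]
The underlying idea is that on the event $\{w(\underline{\msf{d}})\ge s\}$ the optimal $K$-user scheme is, after a symmetrization, already delivering files to $s$ users whose demands are pairwise distinct, and so on that event it cannot beat the best $s$-user scheme for distinct demands, which by definition costs $\bar{R}_s(M,\mc{N},s)$.

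\emph{Symmetrization.} Since the uniform law on $\mc{N}^K$ is exchangeable in the users, I would first observe that the optimal scheme may be taken to be symmetric under relabeling of the users: if it is not, average it over its $K!$ user-relabelings (by time sharing, which is admissible in the large-$F$ model and preserves the memory constraint and the expected rate, the demand law being exchangeable). For such a scheme the rate satisfies $\rho(\underline{d})=\rho(\pi\cdot\underline{d})$ for every $\pi\in S_K$, and it remains correct for \emph{every} demand $\underline{d}$ (user $k$ recovers coordinate $k$ of $\underline{d}$).

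\emph{Reduction to $s$ users.} I would then condition on $\{w(\underline{\msf{d}})\ge s\}$; on this event let $a_1,\dots,a_s$ be the first $s$ distinct entries of $\underline{\msf{d}}$ in order of first appearance, let $j_1<\dots<j_s$ be the positions of these first appearances, and pick $\pi_{\underline{\msf{d}}}\in S_K$ with $\pi_{\underline{\msf{d}}}(j_t)=t$. Because the uniform law on $\mc{N}^K$ is also symmetric under relabeling of the files, the tuple $(a_1,\dots,a_s)$ is, conditionally on $\{w(\underline{\msf{d}})\ge s\}$, uniform over the injections $[s]\hookrightarrow\mc{N}$; and by the $S_K$-symmetry of $\rho$, $\E[\rho(\underline{\msf{d}})\mid w\ge s]=\E[\rho(\pi_{\underline{\msf{d}}}\cdot\underline{\msf{d}})\mid w\ge s]$, where the demand $\pi_{\underline{\msf{d}}}\cdot\underline{\msf{d}}$ has its first $s$ coordinates equal to the distinct values $a_1,\dots,a_s$. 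I would then build an $s$-user scheme $\Sigma_s$ for files $\mc{N}$ with cache size $M$: give user $t\in\{1,\dots,s\}$ the cache of user $t$ of the symmetrized $K$-user scheme; given an injective demand $(a_1,\dots,a_s)$, sample the remaining $K-s$ coordinates from the conditional law of $\pi_{\underline{\msf{d}}}\cdot\underline{\msf{d}}$ given its first $s$ coordinates, and transmit the symmetrized $K$-user scheme's message for the assembled length-$K$ demand. By correctness of that scheme, users $1,\dots,s$ recover $a_1,\dots,a_s$, so $\Sigma_s$ is correct, and averaging over a uniformly random injective demand its expected rate equals $\E[\rho(\pi_{\underline{\msf{d}}}\cdot\underline{\msf{d}})\mid w\ge s]=\E[\rho(\underline{\msf{d}})\mid w(\underline{\msf{d}})\ge s]$. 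The demand law driving $\Sigma_s$ is precisely the uniform law on $\mc{N}^s$ conditioned on all $s$ entries being distinct (cf.\ Example~\ref{eg:repetition}), so optimality yields $\bar{R}_s(M,\mc{N},s)\le\E[\rho(\underline{\msf{d}})\mid w(\underline{\msf{d}})\ge s]$, which would complete the proof.

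The main obstacle is the symmetrization/relabeling bookkeeping: verifying that passing to a user-symmetric optimal scheme is genuinely without loss in this model, that conditioning on $\{w\ge s\}$ together with the data-dependent relabeling $\pi_{\underline{\msf{d}}}$ produces exactly a valid $s$-user instance whose demand law matches the uniform-over-injections law in the definition of $\bar{R}_s(M,\mc{N},s)$, and that $\Sigma_s$ is allowed to draw the auxiliary randomness needed to sample the padding coordinates. The rest is routine, and no cut-set or counting input enters at this step (those are used in the later steps of the proof of Claim~\ref{thm:claim1}).
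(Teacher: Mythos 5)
Your proposal is correct, and it is the same in spirit as the paper's proof---condition on the number of distinct demands and reduce to a system with fewer users whose demands are distinct---but it is organized quite differently. The paper conditions on the exact count $w(\underline{\msf{d}})=j$, writes $\bar{R}(M,\mc{N},K)\geq\sum_j q_j\bar{R}_j(M,\mc{N},K)$ with $q_j=\Pp(w(\underline{\msf{d}})=j)$, and then chains two monotonicity facts, $\bar{R}_j(M,\mc{N},K)\geq\bar{R}_j(M,\mc{N},j)$ (fewer users cannot hurt) and $\bar{R}_j(M,\mc{N},j)\geq\bar{R}_s(M,\mc{N},s)$ for $j\geq s$, before resumming. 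You instead condition on the coarser event $\{w(\underline{\msf{d}})\geq s\}$ and construct a single explicit reduction to the $s$-user model via symmetrization, a data-dependent relabeling that moves the first $s$ distinct requests into positions $1,\dots,s$, and padding by sampling the remaining $K-s$ coordinates from the appropriate conditional law. What the paper's route buys is modularity and brevity: it can state the two ``drop users'' steps as self-evident monotonicities without unpacking them. What your route buys is that it makes that reduction explicit---in particular it surfaces the user-relabeling/symmetrization bookkeeping that is implicitly also needed to justify the paper's step $\bar{R}_j(M,\mc{N},K)\geq\bar{R}_j(M,\mc{N},j)$ (since the positions of the $j$ distinct entries in $\underline{\msf{d}}$ are data-dependent, one cannot simply hand ``the first $j$ caches'' to the reduced system without a symmetrization or an averaging argument). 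The obstacles you flag at the end (admissibility of time-sharing over user relabelings, the demand law of the reduced instance matching the law defining $\bar{R}_s(M,\mc{N},s)$, and server randomness for the padding, which can in any case be derandomized by choosing, for each injective $(a_1,\dots,a_s)$, the padding minimizing the rate) are all resolvable exactly as you sketch, so the argument is sound.
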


Lemma~\ref{thm:step1} lower bounds the expected rate of the optimal
scheme for uniform file popularities as a function of the expected rate
of the optimal scheme for file requests chosen uniformly at random without
replacement (see Example~\ref{eg:repetition}). 

\begin{lemma}
    \label{thm:step3}
    Assume $\underline{\msf{d}}$ is uniformly distributed over
    $\mc{N}^K$. Then, for $s \leq \ceil{\min\{N, K\}/4}$, we have
    \begin{equation*}
        \Pp\bigl(w(\underline{\msf{d}}) \geq s\bigr)
        \geq 2/3.
    \end{equation*}
\end{lemma}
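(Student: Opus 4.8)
The plan is to bound directly the probability of the complementary event $\{w(\underline{\msf{d}}) < s\}$ by a union bound over small subsets of $\mc{N}$, rather than going through a second-moment estimate. Write $W \defeq w(\underline{\msf{d}})$ and $m \defeq \min\{N,K\}$, so the goal is $\Pp(W \geq s) \geq 2/3$ for every $s \leq \ceil{m/4}$. First I would dispose of the trivial case $s=1$, where $\Pp(W\geq 1)=1$. For $s\geq 2$ the hypothesis $s\leq\ceil{m/4}$ forces $\ceil{m/4}\geq 2$, hence $m\geq 5$, and since $m=\min\{N,K\}\leq K$ this gives $K\geq 5$; I will use this at the end.

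The key step is the observation that $\{W\leq s-1\}$ is contained in the union, over all $T\subseteq\mc{N}$ with $\card{T}=s-1$, of the event that all $K$ entries of $\underline{\msf{d}}$ fall inside $T$. Since the entries are i.i.d.\ uniform on $\mc{N}$, each such event has probability $\bigl((s-1)/N\bigr)^K$, so a union bound together with the standard estimate $\binom{N}{k}\leq(eN/k)^k$ yields
\begin{equation*}
    \Pp(W\leq s-1)\leq\binom{N}{s-1}\Bigl(\frac{s-1}{N}\Bigr)^K\leq\Bigl(\frac{eN}{s-1}\Bigr)^{s-1}\Bigl(\frac{s-1}{N}\Bigr)^K=e^{s-1}\Bigl(\frac{s-1}{N}\Bigr)^{K-(s-1)}.
\end{equation*}
Next I would feed in the hypothesis in two ways, using $s-1\leq\ceil{m/4}-1<m/4\leq\min\{N,K\}/4$: first, $(s-1)/N<1/4$, so the last factor is at most $4^{-(K-(s-1))}$; second, $s-1<K/4$, so $K-(s-1)>3K/4$. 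Combining these, $\Pp(W\leq s-1)\leq(4e)^{s-1}4^{-K}<(4e)^{K/4}4^{-K}=\bigl((4e)^{1/4}/4\bigr)^{K}<2^{-K}$, since $(4e)^{1/4}/4\approx 0.45<1/2$. As $K\geq 5$, this is at most $2^{-5}<1/3$, so $\Pp(W\geq s)=1-\Pp(W\leq s-1)>2/3$, which is the claim (in fact with exponential slack).

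The computation is elementary, so the only thing requiring attention is uniformity: the bound must hold for the whole range $2\leq s\leq\ceil{m/4}$ and for both regimes $N\leq K$ and $N\geq K$. This is precisely why I would use the combinatorial union bound — its final estimate $2^{-K}$ depends only on the gap between $K$ and $s$ and is controlled uniformly — rather than a Chebyshev argument on $W=\sum_{n\in\mc{N}}\ind\{n\text{ appears}\}$, which, although it gives $\E(W)>m/2$ and negatively correlated indicators, degrades for small $K$ and small $s$. The one piece of bookkeeping to get right is the ceiling: noting $\ceil{m/4}-1<m/4$ makes the strict inequalities $(s-1)/N<1/4$ and $s-1<K/4$ available regardless of whether $4$ divides $m$.
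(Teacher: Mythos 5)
Your proof is correct, and it takes a genuinely different route from the paper's. The paper frames $w(\underline{\msf{d}})$ as a coupon collector problem: it writes the time to collect $\ceil{\min\{N,K\}/4}$ distinct files as a sum of independent geometric waiting times, bounds its mean by $K/3$, and applies Markov's inequality to get exactly the constant $2/3$. You instead bound the complementary event $\{w(\underline{\msf{d}}) \leq s-1\}$ by a union bound over the $\binom{N}{s-1}$ candidate ``small'' target sets, combined with $\binom{N}{k}\leq(eN/k)^k$, and feed the hypothesis $s-1<\min\{N,K\}/4$ in twice to get $\Pp(w(\underline{\msf{d}})\leq s-1)<\bigl((4e)^{1/4}/4\bigr)^K<2^{-K}$. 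Your bookkeeping (the base case $s=1$, the observation that $s\geq 2$ forces $K\geq 5$, the strict inequality $\ceil{m/4}-1<m/4$) is all sound, and the estimate $(4e)^{1/4}/4\approx 0.45<1/2$ checks out. What the two approaches buy: the paper's Markov-on-the-waiting-time argument is the minimal machinery needed to reach the stated constant $2/3$ and sits naturally alongside the coupon collector vocabulary it already introduces; your union bound is no harder but yields an \emph{exponential} tail $2^{-K}$ rather than a fixed constant, so it would allow a larger leading constant elsewhere in the chain (e.g.\ replacing the factor $2/3$ by something closer to $1$, or tolerating a smaller probability of many distinct requests). Both are valid; yours is strictly stronger as a probability estimate, while the paper's is arguably more self-contained given that it has already set up the coupon collector notation for its own exposition.
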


Lemma~\ref{thm:step3} shows that with large probability the number of
distinct requests in $\underline{\msf{d}}$ is not too small.

\begin{lemma}
    \label{thm:step2}
    We have
    \begin{equation*}
        \max_{s\in\{1,\dots, \ceil{\min\{N,K\}/4}\}}\bar{R}(M, \mc{N}, s,s) 
        \geq \frac{1}{48} r(M, N, K).
    \end{equation*}
\end{lemma}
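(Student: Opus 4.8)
The plan is to prove Lemma~\ref{thm:step2} in two stages: first a cut-set (entropy) lower bound on the conditional expected rate $\bar R_s(M,\mc N,s)$ valid for every $s$ with $1\le s\le N$, and then a purely deterministic optimization over $s$ comparing this bound to the closed form of $R(M,N,K)$ in~\eqref{eq:PeakRate}.

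For the first stage, fix a scheme attaining $\bar R_s(M,\mc N,s)$: its placement caches at most $MF$ bits per user, and its delivery produces, for each distinct-entry demand vector $\underline d\in\mc N^s$, a transmission of length $\rho(\underline d)F$ bits, with $\E[\rho(\underline D)]=\bar R_s(M,\mc N,s)$ when $\underline D$ is uniform over the distinct-entry vectors. Put $t\defeq\floor{N/s}\ge1$. Draw a uniformly random permutation $\Pi$ of $\mc N$ and form $t$ demand vectors $\underline d^{(1)},\dots,\underline d^{(t)}$ whose $j$-th one has coordinates $\Pi((j-1)s+1),\dots,\Pi(js)$; each $\underline d^{(j)}$ is marginally distributed as $\underline D$, and collectively they request $st$ distinct files. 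Since the placement is demand-independent, applying the fixed scheme to each $\underline d^{(j)}$ shows that the $s$ caches together with the $t$ transmissions determine all $st$ requested files. As the files are independent and uniformly distributed, each cache has entropy at most $MF$, and transmission $j$ has entropy at most $\rho(\underline d^{(j)})F$, subadditivity of entropy gives $\sum_{j=1}^t\rho(\underline d^{(j)})\ge st-sM$ for every realization of $\Pi$. Taking the expectation over $\Pi$ yields $t\,\bar R_s(M,\mc N,s)\ge st-sM$, i.e.
\[
    \bar R_s(M,\mc N,s)\ \ge\ s\Bigl(1-\frac{M}{\floor{N/s}}\Bigr)\ =:\ g(s),\qquad 1\le s\le N.
\]

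For the second stage, if $M\ge N$ then $R(M,N,K)=0$ and the lemma is immediate since $\bar R_s\ge0$, so assume $M<N$; then $g(1)=1-M/N>0$. It remains to prove the deterministic inequality $\max_{1\le s\le\ceil{\min\{N,K\}/4}}g(s)\ge\tfrac{1}{48}R(M,N,K)$. The quantity $\max_{1\le s\le\min\{N,K\}}g(s)$ is precisely the cut-set lower bound on the optimal peak rate, and the analysis of~\cite{maddah-ali13} shows $R(M,N,K)\le12\max_{1\le s\le\min\{N,K\}}g(s)$, so it suffices to show that truncating the range of $s$ from $\{1,\dots,\min\{N,K\}\}$ to $\{1,\dots,\ceil{\min\{N,K\}/4}\}$ costs at most a further factor $4$ in the maximum of $g$. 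This is elementary: let $s'$ be a maximizer over the full range; if $g(s')\le0$ then $g(1)>0\ge g(s')/4$; if $g(s')>0$ then $M<\floor{N/s'}$, and for $s'\le4$ one checks $g(1)=1-M/N\ge\tfrac{s'}{4}(1-s'M/N)\ge g(s')/4$ directly, while for $s'\ge5$ the index $s\defeq\ceil{s'/4}$ lies in the truncated range, satisfies $s\ge s'/4$ and $\floor{N/s}\ge2\floor{N/s'}$, hence $g(s)\ge\tfrac{s'}{4}\bigl(1-\tfrac{M}{2\floor{N/s'}}\bigr)\ge g(s')/4$. Chaining the two stages gives $\max_{1\le s\le\ceil{\min\{N,K\}/4}}\bar R_s(M,\mc N,s)\ge\tfrac14\cdot\tfrac{1}{12}R(M,N,K)=\tfrac{1}{48}R(M,N,K)$. (Alternatively, the deterministic inequality can be proved from scratch by casework on the closed form~\eqref{eq:PeakRate}, taking $s\approx\ceil{\min\{N,K\}/4}$ when the normalized cache $M$ is small relative to $N/\min\{N,K\}$ and $s\approx\floor{N/(4M)}$ otherwise; this avoids invoking the constant from~\cite{maddah-ali13} at the price of more computation.)

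The real content is the first stage: converting the \emph{optimized, average-case} quantity $\bar R_s$, whose scheme may be tailored to the distinct-demand distribution, into a usual worst-case cut-set inequality. Averaging over a random relabeling of the files is what makes this legitimate, precisely because the placement does not depend on the demands; without that trick one would only bound a worst-case rate rather than $\bar R_s$ itself. The second stage is essentially bookkeeping — the factor $4$ in ``$48=4\cdot12$'' is exactly the price of the range restriction forced on us by Lemma~\ref{thm:step3} — and I expect the most error-prone part to be tracking the integer floors and the signs of $g$ in the range-restriction estimate, together with the degenerate small-$N$, small-$K$ cases.
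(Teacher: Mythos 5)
Your proof is correct and follows essentially the same route as the paper's: a symmetrization/cut-set argument yielding $\bar R_s(M,\mathcal N,s)\ge s(1-M/\lfloor N/s\rfloor)$, followed by the bound $\max_{1\le s\le\min\{N,K\}}s(1-M/\lfloor N/s\rfloor)^+\ge\tfrac1{12}R(M,N,K)$ from \cite[Theorem~2]{maddah-ali13} and a factor-$4$ range-restriction argument. The only cosmetic differences are that you implement the symmetrization via a random permutation of $\mathcal N$ rather than the paper's counting over $I$-tuples of disjoint $s$-subsets, and in the range restriction you pick $s=\lceil s'/4\rceil$ rather than the paper's fixed $s=\lceil\min\{N,K\}/4\rceil$ (where, incidentally, the simpler observation $\lfloor N/s\rfloor\ge\lfloor N/s'\rfloor$ from $s\le s'$ already suffices, so the factor-$2$ estimate $\lfloor N/s\rfloor\ge2\lfloor N/s'\rfloor$ is unnecessary).
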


Lemma~\ref{thm:step2} is the key step in the proof
of~\eqref{eq:claim1_0}. It lower bounds the expected rate of the optimal
scheme for file requests chosen uniformly at random without replacement
as a function of the peak rate $r(M, N, K)$ of Algorithm~\ref{alg:1}.

Combining Lemmas~\ref{thm:step1}, \ref{thm:step3}, and \ref{thm:step2},
we obtain 
\begin{align*}
    R^{\star}(M, \mc{N}, K, \{1/N\})
    & \geq \max_{s \in \{1, \dots, \ceil{\min\{N,K\}/4}\}}
    \Pp\bigl(w(\underline{\msf{d}}) \geq s\bigr) \bar{R}(M, \mc{N}, s,s) \\
    & \geq \frac{2}{3}\cdot\frac{1}{48}r(M, N, K) \\
    & = \frac{1}{72}r(M, N, K),
\end{align*}
completing the proof of Claim~\ref{thm:claim1}. \hfill\IEEEQED

\begin{remark}
    Essentially the same argument as in Lemma~\ref{thm:step2}
    can be used to show that
    \begin{equation*}
        \bar{R}(M, \mc{N}, K,K) 
        \geq \frac{1}{12} r(M, N, K).
    \end{equation*}
    Thus, the peak rate $r(M, N, K)$ of Algorithm~\ref{alg:1} is within a
    factor of at most $12$ from the expected rate of the optimal scheme
    for requests chosen without replacement. 
\end{remark}

We next prove Lemmas~\ref{thm:step1}--\ref{thm:step2}.

\begin{IEEEproof}[Proof of Lemma~\ref{thm:step1}]
    Assume $\underline{\msf{d}}$ is uniformly distributed over
    $\mc{N}^K$, and denote by 
    \begin{equation*}
        q_j \defeq \Pp\bigl( w(\underline{\msf{d}}) = j \bigr)
    \end{equation*}
    the probability that the demand vector has exactly $j$ distinct
    entries. We can then lower bound the left-hand side in the statement of
    Lemma~\ref{thm:step1} as 
    \begin{align*}
      R^{\star}(M, \mc{N}, K, \{1/N\})
        & \geq \sum_{j=1}^K q_j \bar{R}(M, \mc{N}, K,j).
    \end{align*}

    Now, reducing the number of users can only decrease the optimal
    expected rate, so that
    \begin{equation*}
        \bar{R}(M, \mc{N}, K,j) \geq \bar{R}(M, \mc{N}, j,j).
    \end{equation*}
    Hence, we can further lower bound $R^\star(M, \mc{N}, K, \{1/N\})$ as
    \begin{align*}
        R^{\star}(M, \mc{N}, K, \{1/N\})
        & \geq \sum_{j=1}^K q_j \bar{R}(M, \mc{N}, j,j) \\
        & \geq \sum_{j=s}^K q_j \bar{R}(M, \mc{N}, j,j)
    \end{align*}
    for any $s \in \{1, 2, \ldots, K\}$. Similarly, 
    \begin{equation*}
        \bar{R}(M, \mc{N}, j,j) \geq \bar{R}(M, \mc{N}, s,s)
    \end{equation*}
    for $j \geq s$, so that
    \begin{align*}
        R^{\star}(M, \mc{N}, K, \{1/N\}) & \geq \sum_{j=s}^K q_j \bar{R}(M,
        \mc{N}, s,s) \\ & = \Pp\bigl( w(\underline{\msf{d}}) \geq s \bigr)
        \bar{R}(M, \mc{N}, s,s)
    \end{align*}
    for any $s \in \{1, 2, \ldots,K\}$.
\end{IEEEproof}

\begin{IEEEproof}[Proof of Lemma~\ref{thm:step3}]
    Consider a sequence of independent and identically distributed random
    variables $\msf{d}_1, \msf{d}_2, \dots$ each uniformly distributed over
    $\mc{N}$, and let
    \begin{equation*}
        \underline{\msf{d}} \defeq (\msf{d}_k)_{k=1}^K.
    \end{equation*}
    For $s \leq\ceil{\min\{N,K\}/4}$, we aim to lower bound
    \begin{align*}
        \Pp\bigl(w(\underline{\msf{d}}) \geq s \bigr) 
        & \geq \Pp\bigl(w(\underline{\msf{d}}) \geq \ceil{\min\{N,K\}/4}\bigr).
    \end{align*}
    This is a standard coupon collector problem, and our analysis 
    follows~\cite[Chapter 3.6]{motwani95}.
    
    Let 
    \begin{equation*}
        \msf{f}_k \defeq w\bigl((\msf{d}_1, \dots, \msf{d}_k)\bigr)
    \end{equation*}
    be the number of distinct elements in the first $k$ requests.
    Note that $\msf{f}_1, \msf{f}_2, \dots$ is an increasing sequence of
    random variables. We denote by the random variable $\msf{z}_i$ the
    number of elements in the random sequence $\msf{f}_1, \msf{f}_2, \ldots$
    that take value $i$. Observe that $\msf{z}_1, \msf{z}_2, \dots$ are
    independent random variables, and $\msf{z}_i$ is geometrically
    distributed with parameter $(N-i+1)/N$. 
    
    Set
    \begin{equation*}
        \msf{z} \defeq \sum_{i=1}^{\ceil{\min\{N,K\}/4}-1} \msf{z}_i.
    \end{equation*}
    Note that $\msf{z}=k$ means that $k+1$ is the first time such that
    $w\bigl((\msf{d}_1, \dots, \msf{d}_{k+1})\bigr) = \ceil{\min\{N,K\}/4}$. Hence,
    \begin{equation*}
        \Pp\bigl(w(\underline{d}) \geq \ceil{\min\{N,K\}/4}\bigr) 
        = \Pp(\msf{z} < K).
    \end{equation*}
    
    Now
    \begin{align*}
        \E(\msf{z}) 
        & = \sum_{i=1}^{\ceil{\min\{N,K\}/4}-1} \frac{N}{N-i+1} \\
        & \leq \bigl(\ceil{\min\{N,K\}/4}-1\bigr) \frac{N}{N-\ceil{\min\{N,K\}/4}+2} \\
        & \leq \frac{\min\{N,K\}}{4}\cdot\frac{N}{N-\min\{N,K\}/4} \\
        & \leq \frac{K}{4}\cdot\frac{N}{3N/4} \\
        & = K/3.
    \end{align*}
    Hence, by Markov's inequality,
    \begin{equation*}
        \Pp(\msf{z} \geq K) 
        \leq \frac{E(\msf{z})}{K}
        \leq 1/3.
    \end{equation*}
    This implies that
    \begin{equation*}
        \Pp\bigl(w( \underline{\msf{d}}) \geq \ceil{\min\{N,K\}/4}\bigr) 
        = 1-\Pp(\msf{z} \geq K) 
        \geq 2/3,
    \end{equation*}
    proving Lemma~\ref{thm:step3}. 
\end{IEEEproof}

\begin{IEEEproof}[Proof of Lemma~\ref{thm:step2}]
    We make use of a symmetrization argument combined with a cut-set
    argument around $s$ users. Fix a value of $s\in\{1, 2, \dots, \ceil{\min\{N,K\}/4}\}$.  

    We start with the symmetrization argument. Consider a
    scheme achieving the optimal expected rate $\bar{R}(M, \mc{N}, s, s)$
    for a system with $s$ users requesting files uniformly at random
    from $\mc{N}$ without replacement. For a particular request
    $\underline{d}\in\mc{N}^s$ with $s$ distinct entries
    $w(\underline{d}) = s$, denote by $\bar{R}_{ \underline{d}}(M, \mc{N}, s,
    s)$ the rate of this scheme when the request vector is
    $\underline{d}$. Since there are $N!/(N-s)!$ different such
    request vectors, each with equal probability, we have
    \begin{equation}
        \label{eq:claim1_1}
        \bar{R}(M, \mc{N}, s,s)
        = \frac{(N-s)!}{N!}\sum_{\underline{d} \in\mc{N}^s: w(\underline{d}) = s} 
        \bar{R}_{ \underline{d}}(M, \mc{N}, s,
    s).
    \end{equation}

    Now, let
    \begin{equation}
        \label{eq:idef}
        I \defeq \floor{N/s},
    \end{equation}
    and consider $I$-tuples $(\mc{S}_1, \mc{S}_2, \dots, \mc{S}_I)$ of
    subsets of $\mc{N}$ with the property that each subset
    $\mc{S}_i\subset\mc{N}$ has cardinality $s$ and that distinct
    subsets are disjoint. By the definition of $I$ such subsets exist.
    Denote by $\mc{P}$ the collection of all possible such ordered
    $I$-tuples $(\mc{S}_1, \mc{S}_2, \dots, \mc{S}_{I})$. Note that, by
    symmetry, every possible subset $\mc{S}$ of cardinality $s$ is
    contained the same number of times in $I$-tuples in $\mc{P}$---this
    is the key property resulting from the symmetrization construction.
    Let $B$ be that number. We can then rewrite 
    \begin{align}
        \label{eq:claim1_2}
        \sum_{\underline{d} \in\mc{N}^s: w(\underline{d}) = s}
        \bar{R}_{ \underline{d}}(M, \mc{N}, s,s)     
           & = \sum_{\mc{S}\subset\mc{N}: \card{\mc{S}}=s}\;
        \sum_{\underline{d} \in\mc{S}^s: w(\underline{d}) = s} 
        \bar{R}_{ \underline{d}}(M, \mc{N}, s, s) \notag\\
        & = \frac{1}{B} \sum_{(\mc{S}_1, \dots, \mc{S}_I)\in\mc{P}}\; \sum_{i=1}^I \; 
        \sum_{\underline{d}\in \mc{S}_i^s: w(\underline{d}) = s} 
        \bar{R}_{ \underline{d}}(M, \mc{N}, s, s)  \notag\\
        & = \frac{1}{B} \sum_{(\mc{S}_1, \dots, \mc{S}_I)\in\mc{P}} \; 
        \sum_{\substack{(\underline{d}_1,\dots,\underline{d}_I)\in
        (\mc{S}_1^s, \dots, \mc{S}_I^s): \\ w(\underline{d}_i) = s \;\forall i}} \;
        \sum_{i=1}^I 
 \bar{R}_{ \underline{d}_i}(M, \mc{N}, s, s).
    \end{align}

    Fix $(\mc{S}_1, \mc{S}_2, \dots, \mc{S}_I)$ in $\mc{P}$ and consider
    corresponding demand vectors $(\underline{d}_1, \underline{d}_2, \dots,
    \underline{d}_I)$, where $\underline{d}_i\in\mc{S}_i^s$ with
    $w(\underline{d}_i) = s$.  We next use a cut-set argument to lower bound
    the sum
    \begin{equation*}
        \sum_{i=1}^{I}  \bar{R}_{ \underline{d}_i}(M, \mc{N}, s, s).
    \end{equation*}

    Recall that $\bar{R}_{ \underline{d}_i}(M, \mc{N}, s, s)$ is the
    rate of a system with $s$ users. Consider those $s$ users. From the
    content of their caches (of total size $sMF$ bits) and the $I$
    transmissions (of total size  $\sum_{i=1}^{I}
    \bar{R}_{\underline{d}_i}(M, \mc{N}, s, s)F$ bits), these users
    together are able to recover the $sI$ distinct files $\cup_{i=1}^I
    \mc{S}_i$  (of total size $sIF$ bits). Hence, by the cut-set bound,
    we must have
    \begin{equation*}
        sMF + \sum_{i=1}^I \bar{R}_{ \underline{d}_i}(M, \mc{N}, s, s)F 
        \geq sIF.
    \end{equation*}
    Simplifying this expression, we obtain that
    \begin{equation*}
        \sum_{i=1}^I \bar{R}_{ \underline{d}_i}(M, \mc{N}, s, s)  \geq s(I-M).
    \end{equation*}
    Since the left-hand side of this inequality is always nonnegative, this can
    be sharpened to 
    \begin{equation*}
        \sum_{i=1}^I \bar{R}_{ \underline{d}_i}(M, \mc{N}, s, s) \geq s(I-M)^+,
    \end{equation*}
    where $(x)^+$ denotes $\max\{x, 0\}$.  Combining this
    with~\eqref{eq:claim1_2}, we can lower bound the right-hand side of
    ~\eqref{eq:claim1_1} as
    \begin{equation}
        \label{eq:claim1_1_second}
        \frac{(N-s)!}{N!}
        \sum_{\underline{d} \in\mc{N}^s: w(\underline{d}) = s} 
        \bar{R}_{ \underline{d}_i}(M, \mc{N}, s, s)
        \geq \frac{1}{I}\cdot s(I-M)^+,
    \end{equation}
    where the normalization $1/I$ arises because we have lower bounded the
    sum of $I$ terms at a time. 

    Substituting \eqref{eq:idef} and \eqref{eq:claim1_1_second} into
    \eqref{eq:claim1_1} yields
    \begin{equation*}
       \bar{R}(M, \mc{N}, s, s)
        \geq s\biggl(1-\frac{M}{\floor{N/s}}\biggr)^+,
    \end{equation*}
    so that
    \begin{equation}
        \label{eq:claim1_3}
        \max_{s \in \{1, \dots, \ceil{\min\{N,K\}/4}\}}
         \bar{R}(M, \mc{N}, s, s)
        \geq \max_{s \in \{1, \dots, \ceil{\min\{N,K\}/4}\}}
        s\biggl(1-\frac{M}{\floor{N/s}}\biggr)^+.
    \end{equation}

    We continue by analyzing the right-hand side. We claim that
    \begin{equation*}
        \max_{s \in \{1, \dots, \ceil{\min\{N,K\}/4}\}}
        s\biggl(1-\frac{M}{\floor{N/s}}\biggr)^+ 
        \geq \frac{1}{4} \max_{s \in \{1, \dots, \min\{N,K\}\}} 
        s\biggl(1-\frac{M}{\floor{N/s}}\biggr)^+.
    \end{equation*}
    Let $s^\star$ be the maximizer of the right-hand side.  If $s^\star \leq
    \ceil{\min\{N,K\}/4}$, then clearly the inequality holds. Assume then that
    $\ceil{\min\{N,K\}/4} < s^\star \leq \min\{N,K\}$. Then
    \begin{align*}
        s^\star\biggl(1-\frac{M}{\floor{N/s^\star}}\biggr)^+
        & \leq 4 \min\{N,K\}/4
        \biggl(1-\frac{M}{\floor{N/\ceil{\min\{N,K\}/4}}}\biggr)^+ \\
        & \leq 4 \ceil{\min\{N,K\}/4} 
        \biggl(1-\frac{M}{\floor{N/\ceil{\min\{N,K\}/4}}}\biggr)^+ \\
        & \leq 4 \max_{s\in\{1, \dots, \ceil{\min\{N,K\}/4}\}} 
        s\biggl(1-\frac{M}{\floor{N/s}}\biggr)^+,
    \end{align*}
    and the inequality holds as well. Together with~\eqref{eq:claim1_3},
    this shows that
    \begin{equation*}
        \max_{s \in \{1, \dots, \ceil{\min\{N,K\}/4}\}}
         \bar{R}(M, \mc{N}, s, s)
        \geq \frac{1}{4} \max_{s \in \{1, \dots, \min\{N,K\}\}}
        s\biggl(1-\frac{M}{\floor{N/s}}\biggr)^+.
    \end{equation*}

    Now, we know by~\cite[Theorem~2]{maddah-ali13} that
    \begin{equation*}
        \max_{s \in \{1, \dots, \min\{N,K\}\}} 
        s\biggl(1-\frac{M}{\floor{N/s}}\biggr)^+
        \geq \frac{1}{12}r(M, N, K).
    \end{equation*}
    Combining these last two inequalities, we obtain that
    \begin{equation*}
        \max_{s \in \{1, \dots, \ceil{\min\{N,K\}/4}\}}
         \bar{R}(M, \mc{N}, s, s) 
        \geq \frac{1}{48}r(M, N, K)
    \end{equation*}
    as needed to be shown.
\end{IEEEproof}

\subsection{Proof of Claim~\ref{thm:claim2} (Uniformization Argument)}
\label{sec:lower_claim2}

We need to show that, if $1/2 \leq p_N/p_n \leq 1$ for all
$n\in\mc{N}$, then
\begin{equation*}
    R^\star(M, \mc{N}, K, \{p_n\}) 
    \geq \frac{1}{12}R^\star(M, \mc{N}, K, \{1/N\}).
\end{equation*}
The left-hand side is the optimal expected rate for a
system with $K$ users requesting the files $\mc{N}$ with popularities
$\{p_n\}_{n\in\mc{N}}$.  The right-hand side is (up to the constant)
the optimal expected rate for the same system but with uniform file popularities.

The uniformization argument is as follows.  Assume that, at the
beginning of the delivery phase of the system, a genie arrives to aid
the transmission of the files.  Consider a user requesting
file $n$. The genie flips a biased coin yielding head with probability
$p_N/p_n \geq 1/2$. If the coin shows tail, the genie provides the user
the requested file for free. If the coin shows head, he does not help
the user. Thus, the probability that a user requests file $n$ and is
not helped by the genie is equal to 
\begin{equation*}
    p_n \cdot \frac{p_N}{p_n} = p_N.
\end{equation*}
Observe that this probability is the same for each file $n$. The genie
repeats this procedure independently for each user.

The users that have their file delivered by the genie can be ignored in
the subsequent delivery phase. The resulting system is thus one with a
random number $\tilde{\msf{K}}$ of users requesting one of the files in
$\mc{N}$ with uniform probability. In other words, we have transformed the
original problem with a fixed number $K$ of users with nonuniform file
popularities into a new problem with a random number of users with
uniform file popularities. 

Consider the scheme achieve the optimal expected rate $R^\star(M,
\mc{N}, \tilde{K}, \{p_n\})$, and let $R_{ \underline{d}}^\star(M,
\mc{N}, \tilde{K}, \{p_n\})$ the rate of that scheme for the request
vector $\underline{d}$. Then, we have
\begin{align}
    \label{eq:claim2_1}
    R^\star(M, \mc{N}, K, \{p_n\})  
    & \geq \sum_{\tilde{K}=1}^K \Pp(\tilde{\msf{K}} = \tilde{K})
    \sum_{\underline{d} \in\mc{N}^{\tilde{K}}} N^{-\tilde{K}}
    R_{ \underline{d}}^\star(M, \mc{N}, \tilde{K}, \{p_n\}) \nonumber\\ 
    & \geq \sum_{\tilde{K}=1}^K \Pp(\tilde{\msf{K}} = \tilde{K})
    R^{\star}(M, \mc{N}, \tilde{K}, \{1/N\}),
\end{align}
where the first inequality follows from the genie-aided argument and the second
inequality follows since $ R^{\star}(M, \mc{N}, \tilde{K}, \{1/N\})$ is
the optimal expected rate under uniform file popularities.

Consider the number of users $K-\tilde{\msf{K}}$ that are helped by the genie.
Recall that the probability $1-p_N/p_n$ that the genie helps is upper bounded
by $1/2$ by assumption on $\{p_n\}_{n\in\mc{N}}$. We therefore have 
\begin{equation*}
    \E(K-\tilde{\msf{K}}) \leq K/2. 
\end{equation*}
By Markov's inequality, we thus obtain
\begin{equation*}
    \Pp(K-\tilde{\msf{K}} \geq 3K/4) \leq 2/3.
\end{equation*}
From this,
\begin{equation*}
    \Pp(\tilde{\msf{K}} \geq \ceil{K/4}) \geq 1/3.
\end{equation*}
In words, with probability at least $1/3$ there are at least
$\ceil{K/4}$ users that are not helped by the genie.

Using this inequality, the right-hand side of \eqref{eq:claim2_1} can be
further lower bounded as
\begin{align}
    \label{eq:claim2_2}
    \sum_{\tilde{K}=1}^K \Pp(\tilde{\msf{K}} = \tilde{K})
     R^{\star}(M, \mc{N}, \tilde{K}, \{1/N\})
    & \geq \sum_{\tilde{K} \geq \ceil{K/4}} \Pp(\tilde{\msf{K}} = \tilde{K}) 
    R^{\star}(M, \mc{N}, \tilde{K}, \{1/N\}) \notag\\
    & \geq \Pp\bigl(\tilde{\msf{K}} \geq \ceil{K/4}\bigr) 
     R^{\star}(M, \mc{N}, \ceil{K/4}, \{1/N\}) \notag\\
    & \geq \frac{1}{3}  R^{\star}(M, \mc{N}, \ceil{K/4}, \{1/N\}).
\end{align}
Notice that the right-hand side is $1/3$ of the optimal expected rate 
for a system with $\ceil{K/4}$ users. 

We would like to relate this to the optimal expected rate 
for a system with $K$ users.  Take such a system and partition the $K$
users into four subsets each with at most $\ceil{K/4}$ users. We can
treat these four subsets of users as parallel systems, in which case the
delivery rate is the sum of the delivery rates for each of the four
parallel systems.  Since the optimal scheme can be no worse than this,
we have the inequality
\begin{equation*}
    R^{\star} (M, \mc{N}, K, \{ 1/N\}) 
    \leq 4 R^{\star} (M, \mc{N}, \ceil{K/4}, \{ 1/N\}) .
\end{equation*}
Using this, \eqref{eq:claim2_2} is further lower bounded as
\begin{equation}
    \label{eq:claim2_3}
    \frac{1}{3} 
    R^{\star} (M, \mc{N}, \ceil{K/4}, \{ 1/N\}) 
    \geq \frac{1}{12} R^{\star} (M, \mc{N}, K, \{ 1/N\}) .
\end{equation}

Combining \eqref{eq:claim2_1}--\eqref{eq:claim2_3}  yields that
\begin{equation*}
   R^{\star}(M, \mc{N}, K, \{p_n\})
    \geq \frac{1}{12} R^{\star}(M, \mc{N}, K, \{1/N\}),
\end{equation*}
proving the claim. \hfill\IEEEQED

\subsection{Proof of Claim~\ref{thm:claim3}}
\label{sec:lower_claim3}

We will show that
\begin{equation*}
    R^\star(M, \mc{N}, K, \{p_n\}) \geq
    \E\bigl( R^\star(M, \mc{N}_\ell, \msf{K}_\ell, \{\xi_\ell p_n\}  ) \bigr).
\end{equation*}
The left-hand side is the expected rate of the optimal scheme for the
original caching problem with $K$ users and files $\mc{N}$. Now assume
that, at the beginning of the delivery phase of the system, a genie
provides to each user requesting a file outside $\mc{N}_\ell$ the
requested file for free. Clearly, this can only reduce the rate over the
shared link. The right-hand side is a lower bound on the expected rate
of the optimal scheme for this genie-aided system. \hfill\IEEEQED

\section*{Acknowledgment}

The authors thank the reviewers for their careful reading of the
manuscript and their detailed comments.


\begin{thebibliography}{10}

\bibitem{maddah-ali12a}
M.~A. Maddah-Ali and U.~Niesen, ``Fundamental limits of caching,'' {\em IEEE
  Trans. Inf. Theory}, vol.~60, pp.~2856--2867, May 2014.

\bibitem{leff93}
A.~Leff, J.~L. Wolf, and P.~S. Yu, ``Replication algorithms in a remote caching
  architecture,'' {\em IEEE Trans. Parallel Distrib. Syst.}, vol.~4,
  pp.~1185--1204, Nov. 1993.

\bibitem{korupolu99}
M.~R. Korupolu, C.~G. Plaxton, and R.~Rajaraman, ``Placement algorithms for
  hierarchical cooperative caching,'' in {\em Proc. ACM-SIAM SODA},
  pp.~586--595, Jan. 1999.

\bibitem{baev01}
I.~Baev and R.~Rajaraman, ``Approximation algorithms for data placement in
  arbitrary networks,'' in {\em Proc. ACM-SIAM SODA}, pp.~661--670, Jan. 2001.

\bibitem{meyerson01}
A.~Meyerson, K.~Munagala, and S.~Plotkin, ``Web caching using access
  statistics,'' in {\em Proc. ACM-SIAM SODA}, pp.~354--363, 2001.

\bibitem{baev08}
I.~Baev, R.~Rajaraman, and C.~Swamy, ``Approximation algorithms for data
  placement problems,'' {\em SIAM J. Comput.}, vol.~38, pp.~1411--1429, July
  2008.

\bibitem{borst10}
S.~Borst, V.~Gupta, and A.~Walid, ``Distributed caching algorithms for content
  distribution networks,'' in {\em Proc. IEEE INFOCOM}, pp.~1--9, Mar. 2010.

\bibitem{breslau99}
L.~Breslau, P.~Cao, L.~Fan, G.~Phillips, and S.~Shenker, ``Web caching and
  {Z}ipf-like distributions: Evidence and implications,'' in {\em Proc. IEEE
  INFOCOM}, pp.~126--134, Mar. 1999.

\bibitem{wolman99}
A.~Wolman, G.~M. Voelker, N.~Sharma, N.~Cardwell, A.~Karlin, and H.~M. Levy,
  ``On the scale and performance of cooperative web proxy caching,'' in {\em
  Proc. ACM SOSP}, pp.~16--31, Dec. 1999.

\bibitem{hefeeda08}
M.~Hefeeda and O.~Saleh, ``Traffic modeling and proportional partial caching
  for peer-to-peer systems,'' {\em IEEE/ACM Trans. Netw.}, vol.~16,
  pp.~1447--1460, Dec. 2008.

\bibitem{maddah-ali13}
M.~A. Maddah-Ali and U.~Niesen, ``Decentralized coded caching attains
  order-optimal memory-rate tradeoff,'' {\em IEEE/ACM Trans. Netw.}, vol.~23,
  pp.~1029--1040, Aug. 2015.

\bibitem{coffman73}
E.~G. Coffman and P.~J. Denning, {\em Operating Systems Theory}.
\newblock Prentice-Hall, 1973.

\bibitem{niesen14}
U.~Niesen and M.~A. Maddah-Ali, ``Coded caching for delay-sensitive content,''
  {\em arXiv:1407.4489 [cs.IT]}, July 2014.

\bibitem{birk06}
Y.~Birk and T.~Kol, ``Coding on demand by an informed source ({ISCOD}) for
  efficient broadcast of different supplemental data to caching clients,'' {\em
  IEEE Trans. Inf. Theory}, vol.~52, pp.~2825--2830, June 2006.

\bibitem{bar-yossef11}
Z.~Bar-Yossef, Y.~Birk, T.~S. Jayram, and T.~Kol, ``Index coding with side
  information,'' {\em IEEE Trans. Inf. Theory}, vol.~57, pp.~1479--1494, Mar.
  2011.

\bibitem{ahlswede00}
R.~Ahlswede, N.~Cai, S.-Y.~R. Li, and R.~W. Yeung, ``Network information
  flow,'' {\em IEEE Trans. Inf. Theory}, vol.~46, pp.~1204--1216, Apr. 2000.

\bibitem{sleator85}
D.~D. Sleator and R.~E. Tarjan, ``Amortized efficiency of list update and
  paging rules,'' {\em Communications ACM}, vol.~28, pp.~202--208, Feb. 1985.

\bibitem{cha07}
M.~Cha, H.~Kwak, P.~Rodriguez, Y.-Y. Ahn, and S.~Moon, ``I tube, you tube,
  everybody tubes: Analyzing the world's largest user generated content video
  system,'' in {\em Proc. ACM IMC}, Oct. 2007.

\bibitem{netflix}
{Netflix Prize}. \url{http://www.netflixprize.com}.

\bibitem{gummadi03}
K.~P. Gummadi, R.~J. Dunn, S.~Saroiu, S.~D. Gribble, H.~M. Levy, and
  J.~Zahorjan, ``Measurement, modeling, and analysis of a peer-to-peer
  file-sharing workload,'' in {\em Proc. ACM SOSP}, pp.~314--329, Oct. 2003.

\bibitem{hachem14}
J.~Hachem, N.~Karamchandani, and S.~Diggavi, ``Multi-level coded caching,''
  {\em arXiv:1404.6563 [cs.IT]}, Apr. 2014.

\bibitem{ji15}
M.~Ji, A.~M. Tulino, J.~Llorca, and G.~Caire, ``Order-optimal rate of caching
  and coded multicasting with random demands,'' {\em arXiv:1502.03124 [cs.IT]},
  Feb. 2015.

\bibitem{hachem15}
J.~Hachem, N.~Karamchandani, and S.~Diggavi, ``Effect of number of users in
  multi-level coded caching,'' in {\em Proc. IEEE ISIT}, June 2015.

\bibitem{zhang15}
J.~Zhang, X.~Lin, and X.~Wang, ``Coded caching under arbitrary popularity
  distributions,'' in {\em Proc. ITA}, Feb. 2015.

\bibitem{motwani95}
R.~Motwani and P.~Raghavan, {\em Randomized Algorithms}.
\newblock Cambridge University Press, 1995.

\end{thebibliography}
\end{document}